\declaretheorem[parent=section,name=Theorem]{thm}
\declaretheorem[sibling=theorem,name=Lemma]{lem}
\declaretheorem[style=definition,sibling=thm]{definition}
\declaretheorem[style=remark,sibling=thm]{example}
\declaretheorem[sibling=thm]{proposition}
\declaretheorem[sibling=thm]{corollary}
\begin{document}
\title{Noncontextuality Inequalities from Antidistinguishability}

\author{Matthew Leifer}
\affiliation{Schmid College of Science and Technology, Chapman University, One University Dr., Orange, CA 92866, USA}
\affiliation{Institute for Quantum Studies, Chapman University, One University Dr., Orange, CA 92866, USA}
\author{Cristhiano Duarte}
\affiliation{Schmid College of Science and Technology, Chapman University, One University Dr., Orange, CA 92866, USA}
\email[Corresponding author: ]{crsilva@chapman.edu}

\date{\today}
\begin{abstract}
Noncontextuality inequalities are usually derived from the distinguishability properties of quantum states, i.e.\ their orthogonality. Here, we show that \emph{anti}distinguishability can also be used to derive noncontextuality inequalities.  The Yu-Oh 13 ray noncontextuality inequality  can be re-derived and generalized as an instance of our antidistinguishability method. For some sets of states, the antidistinguishability method gives tighter bounds on noncontextual models than just considering orthogonality, and the Hadamard states provide an example of this. We also derive noncontextuality inequalities based on mutually unbiased bases and symmetric informationally complete POVMs.  Antidistinguishability based inequalities were initially discovered as overlap bounds for the reality of the quantum state. Our main contribution here is to show that they are also noncontextuality inequalities.
\end{abstract}

\maketitle

\section{Introduction}

\label{Intro}

Quantum contextuality has its origins in work of Bell \cite{Bell_ProblemHiddenVariables_1966}, and Kochen and Specker \cite{Kochen_ProblemHiddenVariables_1967},
where they proved a no-go theorem ruling out deterministic hidden variable theories in which the value assigned to an observable is 
independent of how you measure it.  In recent years, contextuality has attracted increasing attention for its role in quantum information 
processing advantages \cite{Spekkens_2009, Kleinmann_2011, Grudka_2014, Chailloux_2016, Abramsky_2017, Schmid_2018, Duarte_ResourceTheoryContextuality_2018, Ghorai_OptimalQuantumPreparation_2018} and explaining the power of quantum computation \cite{Galvao_2005, Cormick_2006, Anders_2009, Howard_ContextualitySuppliesMagic_2014, Hoban_2014, Karanjai_2018, Abramsky_2017, Frembs_ContextualityResourceMeasurementbased_2018, Raussendorf_2019, Catani_2019}. 
For these purposes, it is useful to find new classes of noncontextuality inequalities and to find the tightest possible bounds on them.

Noncontextuality inequalities are usually based on the orthogonality properties of sets of quantum states, or, equivalently, they are based on our ability to perfectly distinguish sets of quantum states.  A powerful method for deriving bounds on noncontextuality inequalities from the orthogonality graphs of events has been developed by Cabello, Severini and Winter (CSW) \cite{Cabello_NonContextualityPhysical_2010, Cabello_GraphTheoreticApproachQuantum_2014}.  A similar method, also exploring our ability of perfectly distinguish between objects, has been applied to Bell inequalities to provide tighter bounds \cite{RDLTC14}.

In this paper, we show that the \emph{anti}distinguishability properties \cite{Leifer_QuantumStateReal_2014} \footnote{Antidistinguishability also goes by the names \emph{PP-incompatibility} \cite{Caves_ConditionsCompatibilityQuantumstate_2002} and \emph{conclusive exclusion of quantum states}  \cite{Bandyopadhyay_ConclusiveExclusionQuantum_2014}.} of quantum states can also be used to derive noncontextuality inequalities.  Our method reproduces the inequality used in the Yu-Oh 13 ray proof of contextuality \cite{Yu_StateIndependentProofKochenSpecker_2012}, giving more intuition behind its structure and allowing us to propose several generalizations.  In some cases, when we apply both the CSW method and our method to the same set of states, we get a much tighter bound on the noncontextuality inequality.

The concept of antidistinguishability  was first proposed in \cite{Caves_ConditionsCompatibilityQuantumstate_2002}, and played a key role in the proof of the Pusey, Barrett and Rudolph (PBR) theorem \cite{Pusey_RealityQuantumState_2012}.  The aim of the PBR theorem was to address the question of whether the quantum state is a state of reality, akin to a point in phase space for a classical particle (known as the $\psi$-ontic view of quantum states), or a state of knowledge, more akin to a probability distribution over phase space (known as the $\psi$-epistemic view).  The $\psi$-epistemic view has a lot of advantages, as many otherwise puzzling phenomena, including the indistinguishability of non-orthogonal quantum states and the no-cloning theorem, are easily explained by the fact that the probability distributions representing non-orthogonal quantum states can overlap in a $\psi$-epistemic model \cite{Spekkens_EvidenceEpistemicView_2007, Leifer_QuantumStateReal_2014, Jennings_2015}.  The PBR theorem showed that, within a standard framework for realist models, known as the ontological models framework \cite{Harrigan_EinsteinIncompletenessEpistemic_2010}, only $\psi$-ontic models are possible.  

However, the PBR theorem is based on additional assumptions beyond the bare ontological models framework, and these assumptions have attracted criticism \cite{Hall_GeneralisationsRecentPuseyBarrettRudolph_2011,Emerson_WholeGreaterSum_2013,Schlosshauer_ImplicationsPuseyBarrettRudolphQuantum_2012}.  Subsequently, there was an effort to determine what could be proved about the reality of the quantum state without such additional assumptions.  It was shown that $\psi$-epistemic models exist in all finite Hilbert space dimensions \cite{Lewis_2012, Aaronson_PsiEpistemicTheoriesRole_2013}.  This led to the definition of \emph{maximally} $\psi$-epistemic models \cite{Leifer_MaximallyEpistemicInterpretations_2013,Maroney_HowStatisticalAre_2012,Harrigan_OntologicalModelsInterpretation_2007} \footnote{Non maximally $\psi$-epistemic models were originally defined under the name \emph{deficient} models in \cite{Harrigan_OntologicalModelsInterpretation_2007}.} and the study of overlap bounds for probability distributions in ontological models \cite{Barrett_2014, Leifer_PsEpistemicModelsAre_2014, Branciard_HowPsepistemicModels_2014, Ringbauer_2015, Knee_2017}. 

 In order for the $\psi$-epistemic explanations of quantum phenomena to work, it is not enough that there is just some amount of overlap of probability distributions, but the overlap should be comparable to the degree of indistinguishability of the quantum states.  This was ruled out by showing that it would imply that the ontological model is noncontextual \cite{Leifer_MaximallyEpistemicInterpretations_2013, Leifer_QuantumStateReal_2014}, which is ruled out by existing contextuality proofs.  Noncontextuality inequalities can then be used to bound the degree of overlap in an ontological model, and one class of overlap bounds is based on doing exactly this with CSW inequalities \cite{Leifer_PsEpistemicModelsAre_2014}.  
 
 However, another class of overlap inequalities was proposed in the literature based on the antidistinguishability of quantum states \cite{Barrett_2014, Branciard_HowPsepistemicModels_2014, Ringbauer_2015, Knee_2017} and it was not obvious whether these have anything to do with contextuality.  Our main result is to re-derive these inequalities as noncontextuality inequalities, which means that all the antidistinguishability overlap bounds in the literature can now be reinterpreted as noncontextuality inequalities.  We also re-derive and generalize some other noncontextuality inequalities that have appeared in the literature \cite{Yu_StateIndependentProofKochenSpecker_2012, Bengtsson_2012} by showing that they are examples of the antidistinguishability-based construction.

The rest of this paper is organized as follows.  In \S\ref{ConSce} we review the mathematical framework of \emph{contextuality scenarios} as developed in \cite{Acin_2015}, slightly generalized to allow for both measurements with a fully specified set of outcomes and those with an under-specified set.  This is the framework in which we prove our results.  In \S\ref{Anti}, we give a definition of antidistinguishability for contextuality scenarios that generalizes the existing definition for quantum states.  \S\ref{Ineq} contains our main results.  It introduces the notions of \emph{strong and weak pairwise antisets}, which are sets of outcomes in a contextuality scenario such that any pair of them together with another outcome in a specified set is antidistinguishable.  Our main result shows that there is a noncontextuality inequality associated with any pairwise antiset.  \S\ref{Exa} gives examples of pairwise antisets in quantum theory and their associated noncontextuality inequalities, showing how existing inequalities can be re-derived and generalized in this approach.  The proof of our main results is given in  \S\ref{Proof} and \S\ref{Conc} concludes with a summary and outlook.  

\section{Contextuality Scenarios}
\label{ConSce}

This section reviews a slightly generalized version of the contextuality scenario framework developed in \cite{Acin_2015}.  After introducing the basic definitions, we review the concepts of \emph{value functions} in \S\ref{Cont:Value} and \emph{quantum models} in \S\ref{Cont:Quant}.  These describe the possible noncontextual and quantum realizations of contextuality scenarios respectively  \S\ref{Cont:States} reviews the concept of \emph{states} on a contextuality scenario, which describe the observable probabilities in noncontextual, quantum, and more general models.  The aim is to arrive at a general framework for discussing \emph{noncontextuality inequalities}, which are inequalities satisfied by noncontextual states, but not necessarily quantum or more general states.

\begin{definition}
    A \emph{contextuality scenario} $\mathfrak{C}$ is a structure $\mathfrak{C} = (X,\mathcal{M},\mathcal{N})$ where
    \begin{itemize}
        \item $X$ is a set of \emph{outcomes}.
        \item $\mathcal{M}$ is a set of subsets of $X$ such that if $M,M' \in \mathcal{M}$ then $M'\not\subset M$.  An $M \in \mathcal{M}$ is called a \emph{(measurement) context}.
        \item $\mathcal{N}$ is a set of subsets of $X$ such that if $M \in \mathcal{M}$ then $M \not\in \mathcal{N}$ and if $N,N'\in \mathcal{N}$ then $N' \not\subset N$. An $N \in \mathcal{N}$ is called a \emph{maximal partial (measurement) context}.
    \end{itemize}
    
Finally, a contextuality scenario is \emph{finite} if  $X$ is a finite set.
\end{definition}

The idea of a contextuality scenario is that you have a system on which you can perform several different measurements.  $X$ is the set of all possible measurement outcomes.  A context $M\in \mathcal{M}$ is the full set of distinct outcomes that can occur in some possible measurement.  Note that the condition that $\mathcal{M}$ contains no sets that are subsets of other sets in $\mathcal{M}$ is not usually imposed in the literature, but is true of all the interesting examples.

A maximal partial context $N \in \mathcal{N}$ is a set of outcomes that can occur as the outcome of some possible measurement, but not necessarily the full set.  We allow for the set of outcomes of some measurements to be incompletely specified.  For example, a failure to detect the system at all could count as an unspecified outcome.  In this respect, our definition of a contextuality scenario is slightly more general than that of \cite{Acin_2015}, which only has $\mathcal{M}$.  

Note that all the contextuality scenarios we use in this paper are finite, so we will assume this going forward without further comment.

A contextuality scenario with no maximal partial contexts is a specific type of hypergraph, and, in general, a contextuality scenario can be seen as is a generalization of a hypergraph with two kinds of hyperedges \footnote{Instead of generalizing the concept of a hypergraph, we could simply have considered a coloring process on the hyperedges of such a hypergraph. Assigning different colours to different kinds of hyperedges, we would end up drawing essentially the same graphs as shown in fig.\ref{fig:classical}}.  We can draw diagrams of them by denoting contexts with solid lines and maximal partial contexts with dashed lines, as in the following examples.

\begin{example}
    \label{exa:class}
    A \emph{classical} contextuality scenario has a finite set $X$ of outcomes, $\mathcal{M} = \{X\}$, and $\mathcal{N} = \emptyset$.  A \emph{partial classical} contextuality scenario has a finite set $X$ of outcomes, $\mathcal{M} = \emptyset$, and $\mathcal{N} = \{X\}$.  In words, every set of outcomes can, and indeed does, occur together in a single realization of a measurement.  These scenarios are depicted in \cref{fig:classical}
\end{example}

\begin{figure}[!htb]
    \centering
    \subfloat[A classical contextuality scenario.]{\includegraphics[width=.8\linewidth]{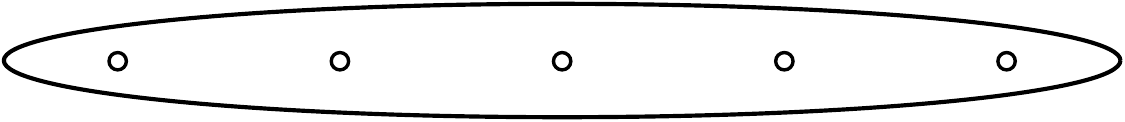}}
    
    \subfloat[A partial classical contextuality scenario.]{\includegraphics[width=.8\linewidth]{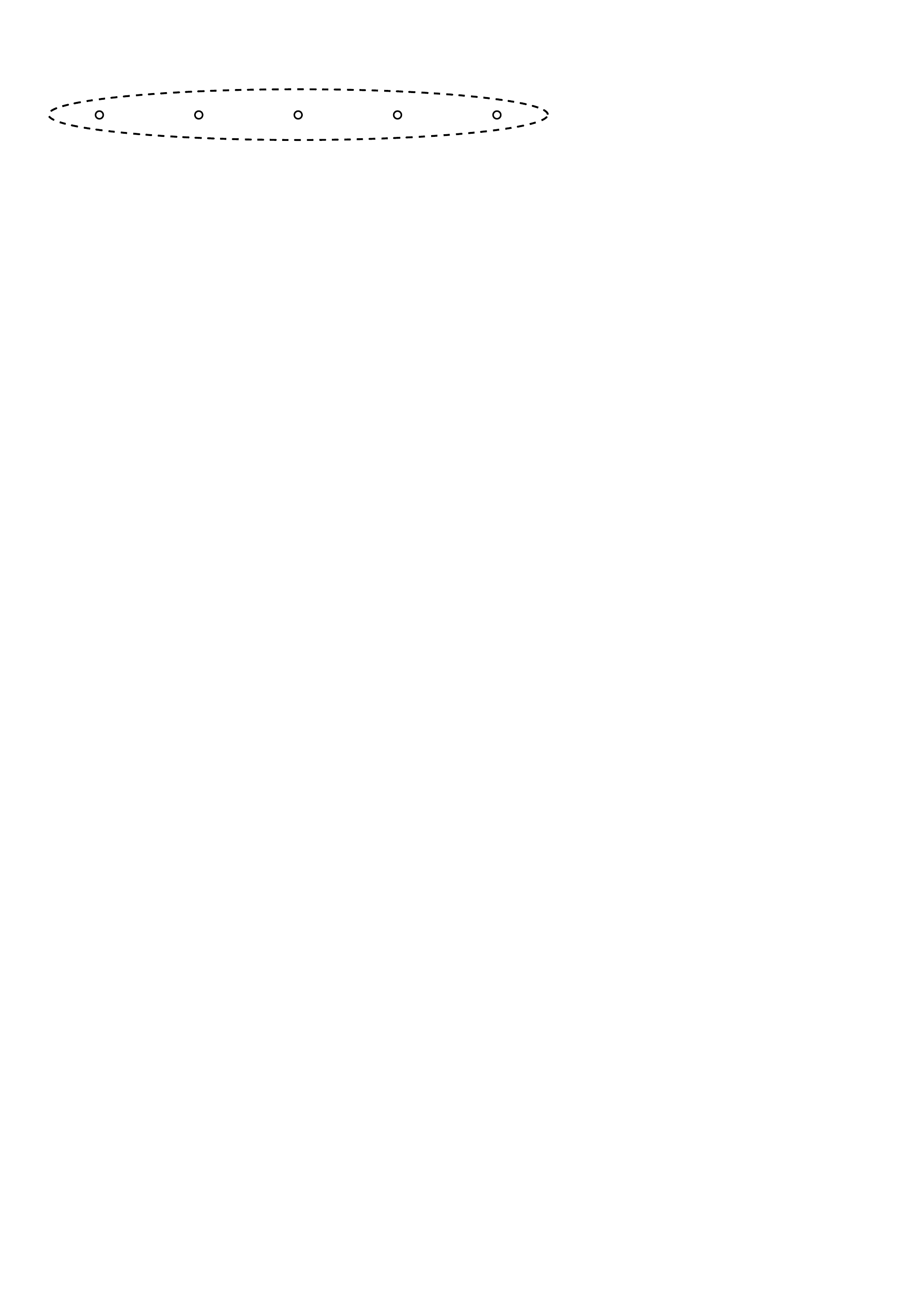}}
    \caption{Examples of classical contextuality scenarios with 5 outcomes.}
    \label{fig:classical}
\end{figure}

\begin{example}
    \label{exa:Speck}
    The \emph{Specker Triangle} \cite{Specker_1960} is the contextuality scenario with $X = \{a,b,c\}$, $\mathcal{M} = \{\{a,b\},\{b,c\},\{c,a\}\}$, and $\mathcal{N} = \emptyset$, as shown in \cref{fig:specker}.
\end{example}

\begin{figure}[!htb]
    \centering
    \includegraphics[scale=0.6]{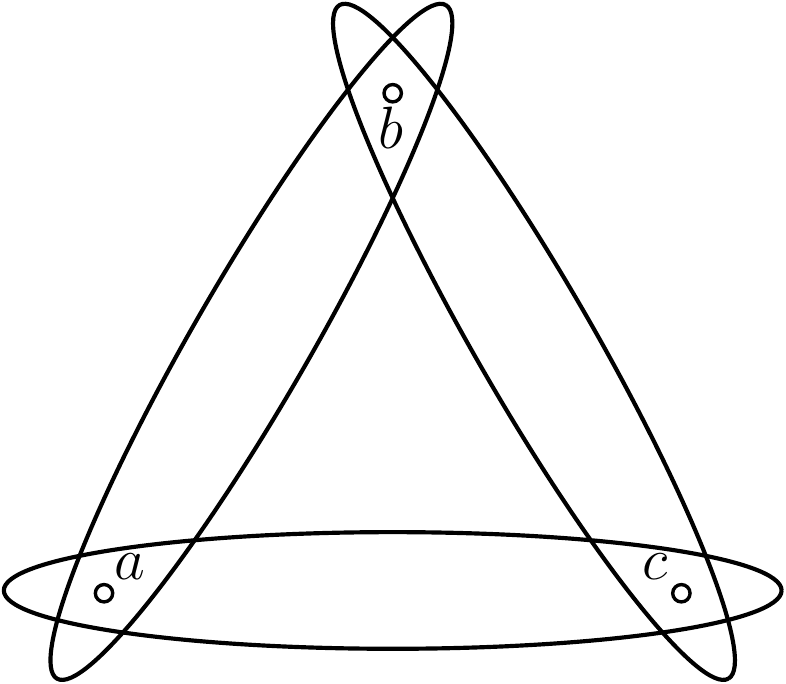}
    \caption{The Specker Triangle}
    \label{fig:specker}
\end{figure}

\begin{example}
    \label{exa:anti}
    The following is an example of an \emph{antidistinguishability} scenario that we will make use of later.  It has both contexts and maximal partial contexts. Set $X = \{a_1,a_2,a_3,a^{\perp}_1,a^{\perp}_2,a^{\perp}_3\}$, $\mathcal{M} = \{\{a^{\perp}_1,a^{\perp}_2,a^{\perp}_3\}\}$, and $\mathcal{N} = \{\{a_1,a^{\perp}_1\},\{a_2,a^{\perp}_2\},\{a_3,a^{\perp}_3\}\}$.  This is shown in \cref{fig:antidistinguish}
\end{example}

\begin{figure}[!htb]
    \centering
    \includegraphics[scale=0.6]{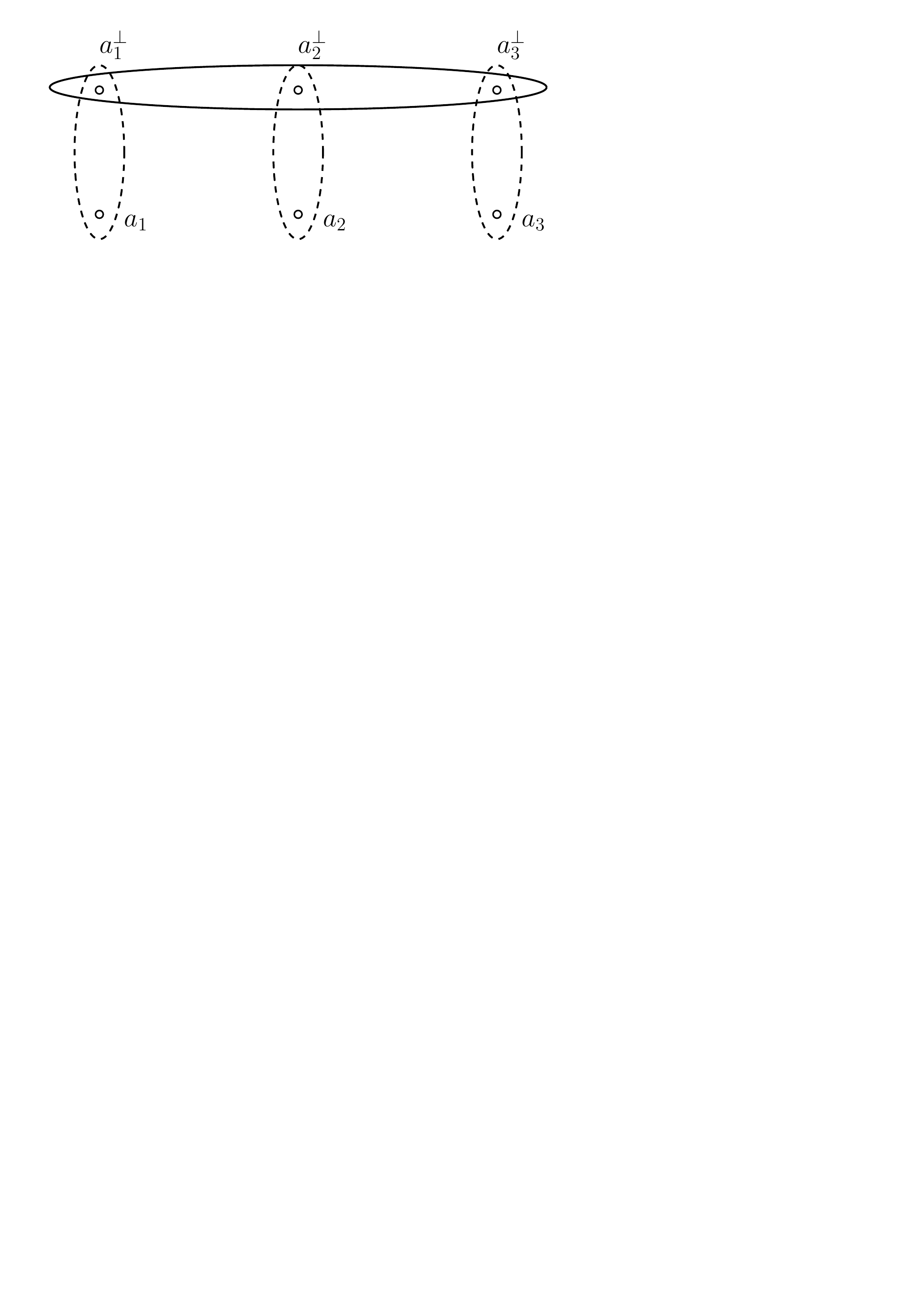}
    \caption{An antidistinguishability scenario}
    \label{fig:antidistinguish}
\end{figure}

\begin{example}
    \label{exa:qanti}
    A \emph{quantum} contextuality scenario is constructed as follows.  Let $X$ be a set of pure states (unit vectors with vectors differing by a global phase identified) in a Hilbert space $\mathcal{H}$.  A subset $M \subseteq X$ is in $\mathcal{M}$ iff $M$ is an orthonormal basis.  A subset $N \subseteq X$ is in $\mathcal{N}$ iff the states it contains are pairwise orthogonal, it is not a basis (i.e.\ it is incomplete), and it is not a subset of any other $M \in \mathcal{M}$ or $N \in \mathcal{N}$.
    
    As an example, consider the six states
    \begin{align}
    \ket{a_1} &= \begin{pmatrix}
           1 \\
           0 \\
           0
         \end{pmatrix},
         \ket{a_2} = \frac{1}{\sqrt{3}} \begin{pmatrix}
           1 \\
           1 \\
           1
         \end{pmatrix},
         \ket{a_3} = \frac{1}{\sqrt{3}} \begin{pmatrix}
           -1 \\
           1 \\
           1
         \end{pmatrix}
         \label{Eq:ExampleAnti} \\
    \ket{a_1^\perp} &= \begin{pmatrix}
           0 \\
           1 \\
           0
         \end{pmatrix},
         \ket{a_2^\perp} = \frac{1}{\sqrt{2}} \begin{pmatrix}
           1 \\
           0 \\
           -1
         \end{pmatrix},
         \ket{a_3^\perp} = \frac{1}{\sqrt{2}} \begin{pmatrix}
           1 \\
           0 \\
           1
         \end{pmatrix}
      \end{align}
      
      Inspection of the orthogonality relations shows that the quantum contextuality scenario generated by these states is the antidistinguishability scenario of \cref{exa:anti}.
\end{example}

\subsection{Value Functions}

\label{Cont:Value}

\begin{definition}
    A \emph{value function} $v:X \rightarrow \{0,1\}$ on a contextuality scenario $\mathfrak{C} = (X,\mathcal{M},\mathcal{N})$ is a function that assigns a value $0$ or $1$ to every outcome such that
    \begin{itemize}
        \item For every $M \in \mathcal{M}$, $v(a) = 1$ for exactly one $a\in M$.
        \item For every $N \in \mathcal{N}$, $v(a) = 1$ for at most one $a \in N$.
    \end{itemize}
    
    The set of all value functions on $\mathfrak{C}$ is denoted $V_{\mathfrak{C}}$.
\end{definition}

\begin{definition}
    For a contextuality scenario $\mathfrak{C} = (X,\mathcal{M},\mathcal{N})$ and an outcome $a \in X$, an $a$-\emph{definite} value function is a value function such that $v(a) = 1$.  The set of $a$-definite value functions is denoted $V_a$.
\end{definition}

The idea of a value function is that it is a deterministic assignment of outcomes to every measurement.  For every context, one of the outcomes must occur because the context contains the full set of possible outcomes of that measurement, so the chosen outcome is assigned the value $1$.  For partial contexts, one of the unspecified outcomes may be the actual outcome of the measurement, so we only demand that at most one outcome is assigned the value $1$.

Value functions are noncontextual because they are defined directly on $X$.  A given $a\in X$ may occur in more than one (maximal partial) context, as in the Specker triangle, but the value assigned to the outcome is not allowed to depend on which context is being measured.

Note that not all contextuality scenarios have value functions.  For example, in the Specker triangle, we would have to assign value $1$ to exactly one of each pair $\{a,b\}$, $\{b,c\}$ and $\{a,c\}$.  By symmetry, we can start by assigning $1$ to any of the three outcomes, so let's choose $a$.  Then we must assign $0$ to $b$ because of the pair $\{a,b\}$ and $0$ to $c$ because of the pair $\{a,c\}$.  But then neither $b$ nor $c$ is assigned the value $1$, which contradicts the requirement that exactly one of the pair $\{b,c\}$ is assigned the value $1$.  

There are also quantum contextuality scenarios that have no value functions.  This is the content of the Bell-Kochen-Specker theorem \cite{Bell_ProblemHiddenVariables_1966, Kochen_ProblemHiddenVariables_1967}.

\subsection{Quantum Models}

\label{Cont:Quant}

\begin{definition}
    A \emph{quantum model} of a contextuality scenario $\mathfrak{C} = (X,\mathcal{M},\mathcal{N})$ consists of
    \begin{itemize}
        \item A choice of Hilbert space $\mathcal{H}$.
        \item For every $a\in X$, a projection operator $P_a$ onto a closed subspace of $\mathcal{H}$ such that:
        \begin{itemize}
            \item For every $M \in \mathcal{M}$, $\sum_{a \in M} P_a=I$, where $I$ is the identity operator.
            \item For every $N \in \mathcal{N}$, $a,b\in N$ and $a\neq b$, $P_aP_b=0$.
        \end{itemize}
    \end{itemize}
\end{definition}

A quantum model represents every context by a projective quantum measurement and every maximal partial context by a subset of the projectors in such a measurement. 

Not all contextuality scenarios have a quantum model.  The Specker triangle is again an example.  The context $\{a,b\}$ implies that $P_a + P_b = I$, so $P_b = I - P_a$, and $\{a,c\}$ that $P_c = I-P_a$.  Then, $\{b,c\}$ implies that $P_b+P_c = I$, and substituting the previous two equations into this gives $P_a = I/2$, which is not a projection operator. 

Clearly, if we start with a quantum contextuality scenario then it has a quantum model, i.e.\ the projectors onto the states that define the model, but it also has other quantum models.  For example, applying a unitary transformation to all the states preserves their orthogonality structure, so it gives us another quantum model.

The Bell-Kochen-Specker theorem implies that there are contextuality scenarios that have a quantum model, but no value functions.  However, whenever there is a value function there is a quantum model.

\begin{proposition}
    \label{prop:classinquant}
    If a contextuality scenario $\mathfrak{C}=(X,\mathcal{M},\mathcal{N})$ has a value function then it also has a quantum model.
\end{proposition}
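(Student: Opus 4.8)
The plan is to give an explicit construction of a quantum model directly from the value function, using the smallest possible Hilbert space. Suppose $v \in V_{\mathfrak{C}}$ is a value function. I would take $\mathcal{H} = \mathbb{C}$, so that the only projection operators on $\mathcal{H}$ are $0$ (the projection onto the zero subspace $\{0\}$) and $I = 1$ (the projection onto all of $\mathbb{C}$), and define $P_a := v(a)\, I$ for each $a \in X$. Since $v(a) \in \{0,1\}$, each $P_a$ is one of these two operators, hence a genuine projection onto a closed subspace, and the data $(\mathcal{H}, \{P_a\}_{a\in X})$ is a candidate quantum model.

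It then remains only to verify the two defining conditions. For the first, fix $M \in \mathcal{M}$: by definition of a value function $v(a) = 1$ for exactly one $a \in M$ and $v(a) = 0$ for all other elements of $M$, so $\sum_{a \in M} P_a = \left(\sum_{a \in M} v(a)\right) I = I$. For the second, fix $N \in \mathcal{N}$ and distinct $a,b \in N$: a value function assigns the value $1$ to at most one element of $N$, so at least one of $v(a)$, $v(b)$ vanishes, whence $P_a P_b = v(a)v(b)\, I = 0$. This establishes the proposition.

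There is essentially no hard step here; the only points requiring a moment's care are that the zero operator does qualify as ``a projection operator onto a closed subspace'' (it projects onto $\{0\}$), and that the cases $\mathcal{M} = \emptyset$ or $\mathcal{N} = \emptyset$ cause no difficulty, since the corresponding condition is then vacuous (and note that the existence of a value function already forces every context to be nonempty, so no pathological context can arise). It is perhaps worth remarking that this one-dimensional construction is precisely the ``classical'' quantum model corresponding to the deterministic assignment $v$, which foreshadows the later discussion of how noncontextual models embed into quantum ones.
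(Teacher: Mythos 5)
Your proof is correct: with $\mathcal{H}=\mathbb{C}$ the only projectors are $0$ and $I$, and setting $P_a = v(a)I$ does satisfy $\sum_{a\in M}P_a = I$ (exactly one value $1$ per context) and $P_aP_b = v(a)v(b)I = 0$ for distinct $a,b$ in a maximal partial context (at most one value $1$), so you have a genuine quantum model, and the side remarks about the zero projector and empty $\mathcal{M}$ or $\mathcal{N}$ are handled appropriately. The route is genuinely different from the paper's, and more minimal: the paper builds the model on the Hilbert space $\mathcal{H}_{V_{\mathfrak{C}}}$ with one basis vector per value function and sets $P_a = \sum_{v\in V_a}\ket{v}\bra{v}$, which is in effect the direct sum over all $v\in V_{\mathfrak{C}}$ of your one-dimensional models. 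Your construction buys brevity and uses only the single value function whose existence is hypothesized; the paper's richer construction buys reusability, because it is invoked immediately afterwards to prove $C_{\mathfrak{C}}\subseteq Q_{\mathfrak{C}}$: taking $\rho = \sum_{v\in V_{\mathfrak{C}}}p_v\ket{v}\bra{v}$ in that model reproduces every KS noncontextual state $\omega(a)=\sum_v p_v v(a)$, whereas your one-dimensional model can only realize the single deterministic state given by $v$ itself (its unique density operator is $1$). So your argument fully establishes the proposition as stated, but if you wanted it to also serve the subsequent inclusion result you would need to pass to the direct sum over all value functions, which is exactly what the paper does.
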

\begin{proof}
    Let $\mathcal{H} = \mathcal{H}_{V_{\mathfrak{C}}}$, i.e.\ the Hilbert space with orthonormal basis vectors labeled by the elements of $V_{\mathfrak{C}}$.  For every $a\in X$, define the projector
    \[P_a = \sum_{v\in V_a} \ket{v}\bra{v}. \]
    This defines a quantum model.  
    
    To see this, let $M \in \mathcal{M}$.  Notice that the sets $V_a$ for $a \in M$ are disjoint because each value function assigns value $1$ to exactly one element of $M$.  They also cover the whole set $V_{\mathfrak{C}}$ because every value function assigns value $1$ to some element of $M$. Thus,
    \begin{align*}
        \sum_{a \in M} P_a & = \sum_{a\in M} \sum_{v\in V_a} \ket{v}\bra{v} \\
        & = \sum_{v\in V_{\mathfrak{C}}} \ket{v}\bra{v} = I.
    \end{align*}
    
    Now let $N \in \mathcal{N}$ and consider $a,b\in N$, $a\neq b$.  We have
    \[
        P_aP_b = \sum_{v\in V_a}\sum_{w \in V_b} \ket{v}\braket{v | w}\bra{w} = 0, 
    \]
    because $V_a$ and $V_b$ are disjoint.
\end{proof}

\subsection{States}

\label{Cont:States}

\begin{definition}
    A \emph{state} $\omega: X \rightarrow [0,1]$ on a contextuality scenario $\mathfrak{C} = (X,\mathcal{M},\mathcal{N})$ is a function that assigns a probability to every outcome such that
    \begin{itemize}
        \item For all $M \in \mathcal{M}$,
        \[\sum_{a\in M} \omega(a) = 1.\]
        \item For all $N \in \mathcal{N}$,
        \[\sum_{a\in N} \omega(a) \leq 1.\]
    \end{itemize}
    
    The set of states on $\mathfrak{C}$ is denoted $S_{\mathfrak{C}}$.
\end{definition}

A state is an assignment of probabilities to outcomes that is compatible with every (maximal partial) context having a well-defined probability distribution.  For the maximal partial contexts, we only demand that the probabilities add up to something less than or equal to $1$ because it is possible to put probability weight on the unspecified outcomes.

For a classical scenario, the states are exactly the probability distributions on $X$ and for a partial classical scenario, they are the sub-normalized probability distributions on $X$.

The Specker triangle has exactly one state: $\omega(a) = \omega(b) = \omega(c) = \frac{1}{2}$, which can be obtained by solving the equations defining the state space.

There are also scenarios with no states, the simplest being $X = \{a_1,a_2,a_3,b_1,b_2,b_3\}$, $\mathcal{M} = \{\{a_1,a_2,a_3\},\{b_1,b_2,b_3\},\{a_1,b_1\},\{a_2,b_2\},\{a_3,b_3\}\}$ and $\mathcal{N} = \emptyset$.  The first two contexts require $\omega(a_1) + \omega(a_2) + \omega(a_3) = 1$ and $\omega(b_1) + \omega(b_2) + \omega(b_3) = 1$, so that
\[\sum_{j=1}^3 \left [ \omega(a_j) + \omega(b_j) \right ] = 2\].  However, the last three contexts require $\omega(a_j) + \omega(b_j) = 1$ for $j=1,2,3$, and hence
\[\sum_{j=1}^3 \left [ \omega(a_j) + \omega(b_j) \right ] = 3,\]
which is a contradiction.

We can represent a state by a vector in the space $\mathbb{R}^X$ where, for each $a \in X$, $\omega(a)$ is the component of the vector in the direction corresponding to $a$.  In this representation, the state space is a convex polytope because it is defined by a finite set of linear equations and inequalities and every component is bounded between $0$ and $1$.

\begin{definition}
    A \emph{Kochen-Specker (KS) noncontextual} state on a contextuality scenario $\mathfrak{C} = (X,\mathcal{M},\mathcal{N})$ is a state $\omega$ such that
    \[\omega(a) = \sum_{v\in V_{\mathfrak{C}}} p_v v(a),\]
    where $p_v$ is a probability distribution on $V_{\mathfrak{C}}$, i.e.\ $0 \leq p_v \leq 1$ and $\sum_{v \in V_{\mathfrak{C}}} p_v = 1$.
    
    The set of KS noncontextual states on $\mathfrak{C}$ is denoted $C_{\mathfrak{C}}$.  A state $\omega$ that is not contained in $C_{\mathfrak{C}}$ is called a \emph{contextual} state.
\end{definition}

Viewed as a subset of $\mathbb{R}^X$, $C_{\mathfrak{C}}$ is also a convex polytope because there are a finite number of value functions which define its vertices.

If we observe probabilities in an experiment that agree with a KS noncontextual state then we can imagine that there is always a definite noncontextual outcome for each measurement, and the observation of probabilities that differ from $0$ or $1$ is just due to our ignorance of which value function holds in each particular run of the experiment.  On the other hand, contextual states cannot be understood in this way.

\begin{definition}
    A \emph{quantum state} on a contextuality scenario $\mathfrak{C} = (X,\mathcal{M},\mathcal{N})$ is a state $\omega$ such that there exists a quantum model and a density operator $\rho$ on $\mathcal{H}$ (the Hilbert state of the model) for which
    \[\omega(a) = \Tr(P_a \rho).\]
    
    The set of quantum states on $\mathfrak{C}$ is denoted $Q_{\mathfrak{C}}$.
\end{definition}

The set of quantum states is the set of observable probabilities for a contextuality scenario that is realized as a quantum experiment.  If we find a contextual quantum state then this is a proof that quantum mechanics is contextual.  The set of quantum states is a compact convex set, but not necessarily a polytope \cite{Brunner13}.

\begin{definition}
    A \emph{state independent noncontextuality inequality} for a contextuality scenario $\mathfrak{C} = (X,\mathcal{M},\mathcal{N})$ is a linear inequality of the form
    \begin{equation}
        \label{eq:ineq}
        \sum_{a\in X} c_a \omega(a) \leq \gamma_c,
    \end{equation}    
    where $c_a,\gamma_c\in\mathbb{R}$, which is satisfied for all $\omega \in C_{\mathfrak{C}}$.
    
    A \emph{state dependent noncontextuality inequality} is an inequality of the form of \cref{eq:ineq} that is satisfied for all $\omega \in C_{\mathfrak{C}}$ that also satisfy some additional set of constraints.
\end{definition}

If, having derived a state independent noncontextuality inequality, we find a state $\omega$ such that $\sum_{a\in X}c_a \omega(a) > \gamma_c$, then this is a proof that $\omega$ is contextual.  The kind of additional constraints that might be imposed in a state dependent inequality are things like $\omega(a) = 0$ for some specified outcome.  In this case, if we find a state such that $\sum_{a\in X}c_a \omega(a) > \gamma_c$ that also satisfies the additional constraints, then this is a proof that $\omega$ is contextual.

Note, the inequalities that we derive in this paper have $c_a \in \{0,1\}$ for all $a \in X$, but more general inequalities are possible.

The terminology state independent/dependent \emph{inequality} that we have introduced here should be contrasted with the notions of state independent/dependent \emph{proofs} of contextuality, which are common in the literature \cite{Yu_StateIndependentProofKochenSpecker_2012}.  In a state independent proof, once a quantum model is fixed for a contextuality scenario, we find that $\sum_{a\in X} c_a \omega(a)$ is completely independent of the quantum state $\omega$ chosen so all quantum states are contextual in that model.  In a state dependent proof, the value varies with $\omega$, so whether the inequality is violated, and by how much it is violated, depends on the state chosen.  A state independent inequality can be the basis of either a state independent or dependent proof, depending on the details of the quantum model chosen, but a state dependent inequality necessarily leads to a state dependent proof, since the inequality does not hold for all choices of state.

\begin{example}[Klyachko Inequality \cite{Klyachko_2002, Klyachko_2008}]
    Consider the Klyachko contextuality scenario $\mathfrak{C} = (X,\mathcal{M},\mathcal{N})$ with $X = \{0,1,2,3,4\}$, $\mathcal{M} = \emptyset$ and $\mathcal{N} = \{\{0,1\},\{1,2\},\{2,3\},\{3,4\},\{4,0\}\}$ as depicted in \cref{fig:Klayatchko}.  Then,
    \[\sum_{a\in X}\omega(a) \leq 2,\]
    is a state independent noncontextuality inequality.
    
\begin{figure}[!htb]
    \centering
    \includegraphics[scale=0.6]{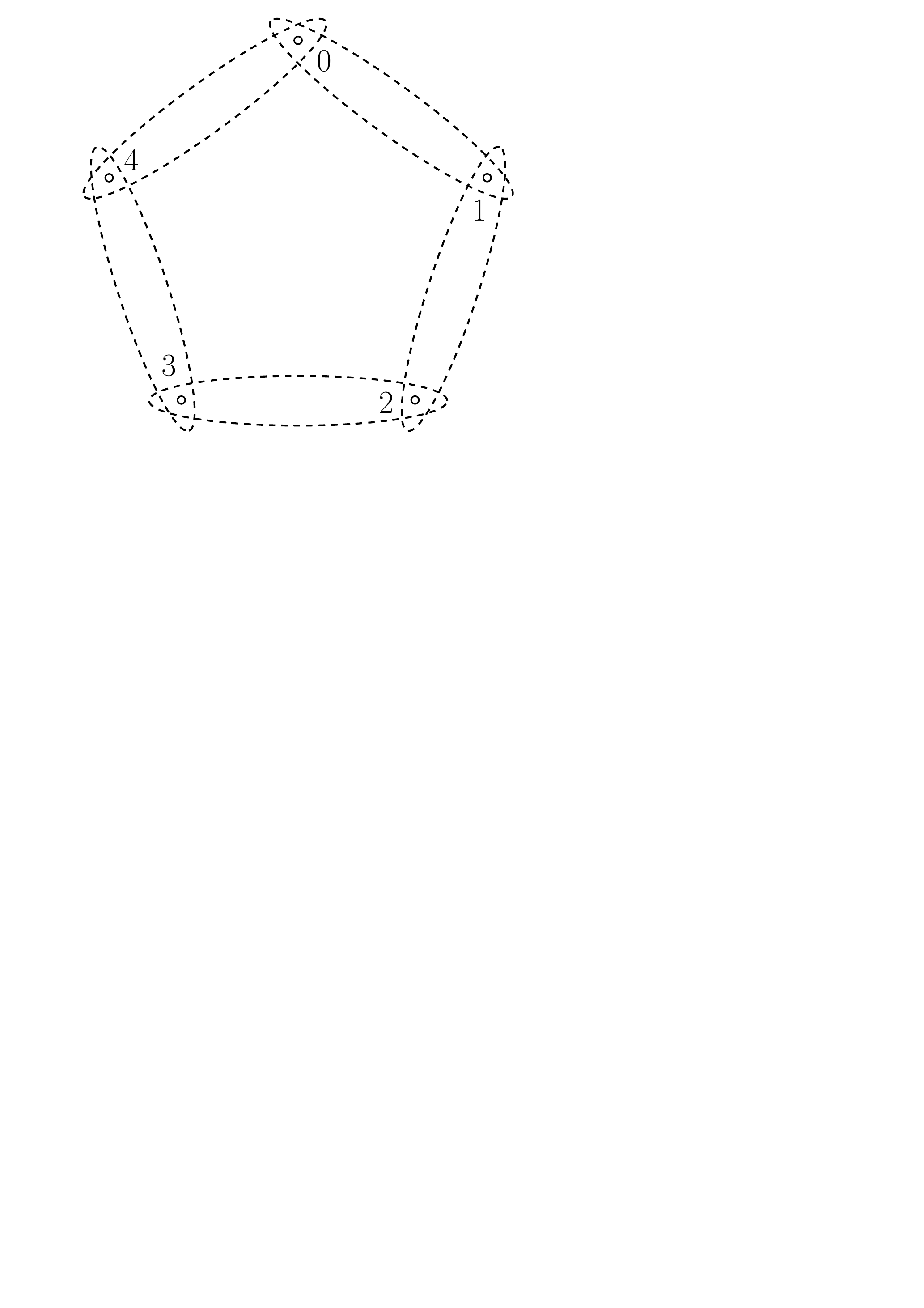}
    \caption{The Klyachko contextuality scenario}
    \label{fig:Klayatchko}
\end{figure}    
    
    To see this note that, for a KS noncontextual state of the form $\omega(a) = \sum_{v\in V_{\mathfrak{C}}} p_v v(a)$, we have
    \begin{align*}
        \sum_{a \in X} \omega_a & = \sum_{a\in X}\sum_{v\in V_{\mathfrak{C}}} p_v v(a) \\
        & = \sum_{v\in V_{\mathfrak{C}}} p_v \sum_{a\in X} v(a) \\
        & \leq \max_{v\in V_{\mathfrak{C}}} \left [ \sum_{a \in X} v(a) \right ],
    \end{align*}
    where the last line follows from convexity.
    
    It is easy to see that, for any $v \in V_{\mathfrak{C}}$, $v(0) + v(1) + v(2) + v(3) + v(4) \leq 2$.  By symmetry, we can start by assigning $v(0) = 1$, which implies that $v(1) = v(4) = 0$.  Then we could assign $v(2) = 1$, which requires $v(3) = 0$, or $v(3) = 1$, which requires $v(2) = 0$.  Either way, we get an upper bound of $2$ for the sum.
\end{example}

\begin{proposition}
    For any contextuality scenario $\mathfrak{C}$, $C_\mathfrak{C} \subseteq Q_{\mathfrak{C}} \subseteq S_{\mathfrak{C}}$.  There exist contextuality scenarios in which both inclusions are strict.
\end{proposition}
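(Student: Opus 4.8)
The plan is to verify the two inclusions directly from the definitions, and then to exhibit a single scenario — the Klyachko scenario of \cref{fig:Klayatchko} — that witnesses both strict inclusions simultaneously. For $C_{\mathfrak{C}} \subseteq Q_{\mathfrak{C}}$: if $\mathfrak{C}$ has no value functions then $C_{\mathfrak{C}} = \emptyset$ and there is nothing to prove, so assume $V_{\mathfrak{C}} \neq \emptyset$. I would reuse the quantum model constructed in the proof of \cref{prop:classinquant}, namely $\mathcal{H} = \mathcal{H}_{V_{\mathfrak{C}}}$ with $P_a = \sum_{v \in V_a} \ket{v}\bra{v}$, and for a KS noncontextual state $\omega(a) = \sum_{v \in V_{\mathfrak{C}}} p_v v(a)$ take the density operator $\rho = \sum_{v \in V_{\mathfrak{C}}} p_v \ket{v}\bra{v}$, which is positive with unit trace. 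Then $\Tr(P_a \rho) = \sum_{w \in V_a} \bra{w}\rho\ket{w} = \sum_{w \in V_a} p_w = \sum_{v \in V_{\mathfrak{C}}} p_v v(a) = \omega(a)$, using $w \in V_a \iff w(a) = 1$ and the orthonormality of the $\ket{v}$. Hence $\omega \in Q_{\mathfrak{C}}$.

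For $Q_{\mathfrak{C}} \subseteq S_{\mathfrak{C}}$: given any quantum model with projectors $\{P_a\}$ and a density operator $\rho$, set $\omega(a) = \Tr(P_a \rho)$ and check the state axioms. Since $0 \leq P_a \leq I$ and $\rho \geq 0$ with $\Tr\rho = 1$, we get $0 \leq \omega(a) \leq 1$. For $M \in \mathcal{M}$, linearity of the trace and $\sum_{a \in M} P_a = I$ give $\sum_{a \in M} \omega(a) = \Tr\rho = 1$. For $N \in \mathcal{N}$, the projectors $\{P_a : a \in N\}$ are pairwise orthogonal, so $\sum_{a \in N} P_a$ is again a projection operator and in particular $\sum_{a \in N} P_a \leq I$; hence $\sum_{a \in N} \omega(a) = \Tr\bigl((\sum_{a \in N} P_a)\rho\bigr) \leq \Tr\rho = 1$. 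Therefore $\omega \in S_{\mathfrak{C}}$, which also gives the well-definedness implicitly assumed in the definition of $Q_{\mathfrak{C}}$.

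For the strictness claims I would work with the Klyachko scenario and the functional $f(\omega) = \sum_{a \in X} \omega(a)$. We already know $f(\omega) \leq 2$ on $C_{\mathfrak{C}}$ (the Klyachko example), while the maximum of $f$ over $Q_{\mathfrak{C}}$ is $\sqrt{5}$, attained by the KCBS pentagram model \cite{Klyachko_2002, Klyachko_2008}; since $2 < \sqrt{5}$, an optimal quantum state lies in $Q_{\mathfrak{C}} \setminus C_{\mathfrak{C}}$, so the first inclusion is strict. The uniform assignment $\omega(a) = \tfrac{1}{2}$ for all $a$ is a state, since $\mathcal{M} = \emptyset$ and each maximal partial context $\{i,i{+}1\}$ gives $\omega(i) + \omega(i{+}1) = 1 \leq 1$, and it has $f(\omega) = \tfrac{5}{2} > \sqrt{5}$, so it lies in $S_{\mathfrak{C}} \setminus Q_{\mathfrak{C}}$ and the second inclusion is strict too. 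The only non-routine input here is the exact quantum value $\sqrt{5}$ of the Klyachko functional, which I would cite rather than re-derive; the inclusions themselves and the check that $\omega \equiv \tfrac{1}{2}$ is a non-quantum state are immediate, so the main obstacle is simply arranging for one scenario to separate all three sets at once rather than any genuinely new argument.
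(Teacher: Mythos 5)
Your proposal is correct and follows essentially the same route as the paper: the inclusion $C_{\mathfrak{C}} \subseteq Q_{\mathfrak{C}}$ via the value-function model of \cref{prop:classinquant} with $\rho = \sum_{v} p_v \ket{v}\bra{v}$, and strictness via the Klyachko scenario with the cited quantum value $\sqrt{5}$ and the uniform state $\omega \equiv \tfrac{1}{2}$ giving $\tfrac{5}{2}$. The only difference is that you verify $Q_{\mathfrak{C}} \subseteq S_{\mathfrak{C}}$ explicitly, whereas the paper notes it is immediate from the definition of a quantum state as a state; this is a harmless extra check, not a different argument.
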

\begin{proof}
    The inclusion of $C_{\mathfrak{C}}$ and $Q_{\mathfrak{C}}$ in $S_{\mathfrak{C}}$ is trivial, since both are defined as subsets of states, so we only have to prove $C_\mathfrak{C} \subseteq Q_{\mathfrak{C}}$.  \Cref{prop:classinquant} shows how to construct a quantum model from the set of value functions.  If we have a KS noncontextual state of the form $\omega(a) = \sum_{v\in V_{\mathfrak{C}}}p_v v(a)$ then we can construct a density operator $\rho = \sum_{v\in V_{\mathfrak{C}}} p_v \ket{v}\bra{v}$ on the Hilbert space of the corresponding model.  It is straightforward to show that this yields the same probabilities.

    For the strictness, consider a noncontextuality inequality $\sum_{a\in X} c_a \omega(a) \leq \gamma_c$ and let $\gamma_q$ be the largest value of $\sum_{a\in X} c_a \omega(a)$ obtainable from a quantum state.  If $\gamma_q > \gamma_c$ and there exists a state with $\sum_{a\in X} c_a \omega(a) > \gamma_q$ then the inclusions are strict.  The Klyachko scenario and inequality are an example of this.  It can be shown that $\gamma_q = \sqrt{5} > 2 = \gamma_c$ for this scenario \cite{Cabello_NonContextualityPhysical_2010, Liang_2011, Cabello_GraphTheoreticApproachQuantum_2014}.  However, $\omega(0) = \omega(1) = \omega(2) = \omega(3) = \omega(4) = 1/2$ is a valid state and this has $\sum_{a\in X}\omega(a) = 5/2 > \sqrt{5}$. 
\end{proof}

\section{Antidistinguishability}
\label{Anti}

In this section, we review the concept of \emph{antidistinguishability}, which was originally introduced under the name \emph{PP-incompatibility} in \cite{Caves_ConditionsCompatibilityQuantumstate_2002} and re-branded as antidistinguishability in \cite{Leifer_QuantumStateReal_2014}.  Although antidistinguishability is usually discussed for sets of quantum states, here we define it for sets of outcomes in a contextuality scenario.  In a quantum contextuality scenario, the outcomes, which are elements of orthonormal bases, can also be regarded as pure quantum states.  Therefore, in a quantum contextuality scenario, antidistinguishability of outcomes and of pure quantum states amounts to the same thing.  In a general contextuality scenario, where there need not be a self-duality between states and measurement outcomes, this would not be the case.  Although the concept of antidistinguishability of states is more natural, antidistinguishability of outcomes is what we need to prove noncontextuality inequalities.

We start this section by giving our general definition, then explain how it reduces to the usual definition for quantum contextuality scenarios, and then state a useful theorem from \cite{Caves_ConditionsCompatibilityQuantumstate_2002} that characterizes antidistinguishability for sets of three pure quantum states.  We will use this to establish examples of antidistinguishability-based noncontextuality inequalities in \S\ref{Exa}.

\begin{definition}
    \label{def:Anti}
    In a contextuality scenario $\mathfrak{C} = (X,\mathcal{M},\mathcal{N})$, a set of outcomes $\{a_1,a_2,\cdots,a_n\} \subseteq X$ is \emph{antidistinguishable} if there exists outcomes $a_1^{\perp},a_2^{\perp},\cdots,a_n^{\perp} \in X$ such that 
    \begin{itemize}
        \item There exists a context $M \in \mathcal{M}$ with $\{a_1^{\perp},a_2^{\perp},\cdots,a_n^{\perp}\} \subseteq M$. 
        \item For each $j \in [n]$, there exists a context or a maximal partial context $N_j$ such that $\{a_j,a_j^{\perp}\} \subseteq N_j$.
        \item For each outcome $a \in M \backslash \{a_1^{\perp},a_2^{\perp},\cdots,a_n^{\perp}\}$ and each $a_j$, there exists a context or maximal partial context $N$ such that $\{a,a_j\}\subseteq N$.
    \end{itemize}
\end{definition}

\Cref{exa:anti} is a simple example of a set of three antidistinguishable outcomes.

To understand this better, it is useful to look at how \cref{def:Anti} applies to the quantum case in more detail. 

\begin{example}
A set $\{\ket{a_1},\cdots,\ket{a_n}\}$, $n \leq d$ of states in $\mathbb{C}^d$ is antidistinguishable if there exists an orthonormal basis $\{ \ket{a_{1}^{\perp}},\cdots,\ket{a_{n}^{\perp}},\cdots,\ket{a_d^{\perp}} \}$ such that
\begin{equation}
\braket{a_j^{\perp} | a_{j}} = 0, \,\, \forall \,\, j \in [n]
\label{Eq:DefAnti1}
\end{equation}
and
\begin{equation}
 \braket{a_k^{\perp} | a_{j}} = 0 , \,\, \forall \,\, j \in [n], k\in [n+1 ,d].
\label{Eq:DefAnti2}
\end{equation}
\label{Def:AntiDist}
\end{example}

The idea of antidistinguishability for states is that if one of the states $\ket{a_1},\cdots,\ket{a_n}$ is prepared and you do not know which then there exists a measurement that allows you to definitively rule out one of the states.  It should be contrasted with distinguishability in which there exists a measurement that allows you to tell exactly which state was prepared.  Antidistinguishability is weaker than distinguishability.

\Cref{Eq:DefAnti2} states that the vectors $\ket{a_j}$ are in the subspace spanned by $\ket{a_k^{\perp}}$ for $k\in [n]$.  This rules out the trivial case where we choose all these $\ket{a_k^{\perp}}$ to be orthogonal to every $\ket{a_j}$ for every $j$.  This is also the reason for the third clause in \cref{def:Anti}.

The following theorem from \cite{Caves_ConditionsCompatibilityQuantumstate_2002}, provides a useful characterization of antidistinguishability for sets of three pure states, as it avoids the need to construct the antidistinguishing measurement explicitly.
\begin{thm}
    Consider a set $\mathcal{A} = \{\ket{a_1},\ket{a_2},\ket{a_3}\}$ of three states and let $x_1 = \vert \braket{a_2 | a_3} \vert^2$, $x_2 = \vert \braket{a_1 | a_3} \vert^2$, $x_3 = \vert \braket{a_1 | a_2} \vert^2$.  Then, $\mathcal{A}$ is antidistinguishable iff
    \begin{align}
        x_1 + x_2 + x_3 & < 1 \label{Eq:Anti1} \\
        (x_1 + x_2 + x_3 - 1)^2 & \geq 4 x_1x_2x_3. \label{Eq:Anti2}
    \end{align}
    \label{Thm:Anti3}
\end{thm}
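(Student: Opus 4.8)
The plan is to work throughout with the orthonormal-basis formulation of antidistinguishability stated just above: since its second condition forces $\ket{a_1},\ket{a_2},\ket{a_3}$ into the span of $\ket{a_1^\perp},\ket{a_2^\perp},\ket{a_3^\perp}$, antidistinguishability of the triple is equivalent to the existence of an orthonormal triple $\{\ket{e_1},\ket{e_2},\ket{e_3}\}$ spanning a three-dimensional space containing all of $\ket{a_1},\ket{a_2},\ket{a_3}$ with $\braket{e_j|a_j}=0$, i.e.\ $\ket{a_j}\in\operatorname{span}\{\ket{e_k}:k\neq j\}$. I would first reduce to the principal case in which $\ket{a_1},\ket{a_2},\ket{a_3}$ are linearly independent, so that this three-dimensional space is forced to be their span; the configurations in which the three states span only a one- or two-dimensional subspace are rank-degenerate and I would dispose of them by a short separate argument at the end (there the $3\times3$ Gram matrix is singular, which by the computation below turns \cref{Eq:Anti2} into an equality, and whether a suitable basis exists can be read off from the orthogonality pattern directly).

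In the linearly independent case, write $\ket{a_1}=\alpha_1\ket{e_2}+\beta_1\ket{e_3}$, $\ket{a_2}=\alpha_2\ket{e_1}+\beta_2\ket{e_3}$ and $\ket{a_3}=\alpha_3\ket{e_1}+\beta_3\ket{e_2}$ with $|\alpha_j|^2+|\beta_j|^2=1$. Each pair of these expansions shares exactly one basis vector, so the overlaps factorize: $x_1=|\braket{a_2|a_3}|^2=|\alpha_2|^2|\alpha_3|^2$, $x_2=|\braket{a_1|a_3}|^2=|\alpha_1|^2|\beta_3|^2$, $x_3=|\braket{a_1|a_2}|^2=|\beta_1|^2|\beta_2|^2$. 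Setting $p_j=|\alpha_j|^2$, antidistinguishability becomes equivalent to solvability, for $(p_1,p_2,p_3)\in[0,1]^3$, of
\begin{equation}
  x_1=p_2p_3,\qquad x_2=p_1(1-p_3),\qquad x_3=(1-p_1)(1-p_2).
  \label{eq:param}
\end{equation}
For the ``only if'' direction one reads the $p_j$ off a given antidistinguishing basis. For ``if'' one takes the $\ket{e_j}$ to be a fixed orthonormal basis, solves \cref{eq:param} (see below), and fixes the remaining phases of $\alpha_j,\beta_j$ so that $\arg\!\big(\braket{a_1|a_2}\braket{a_2|a_3}\braket{a_3|a_1}\big)$ matches that of the given triple; the two triples then have the same Gram matrix, hence are related by a unitary, which transports the constructed antidistinguishing basis back onto $\ket{a_1},\ket{a_2},\ket{a_3}$.

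It then remains to analyze \cref{eq:param}. Substituting $p_1=x_2/(1-p_3)$ and $p_2=x_1/p_3$ into the third equation gives the quadratic
\begin{equation}
  (1-x_3)\,p_3^{\,2}-(1+x_1-x_2-x_3)\,p_3+x_1(1-x_2)=0 ,
  \label{eq:quad}
\end{equation}
whose discriminant works out, after expansion, to be exactly $(x_1+x_2+x_3-1)^2-4x_1x_2x_3$; so \cref{eq:quad} has a real root iff \cref{Eq:Anti2} holds. To pin down when a root can be chosen in $[0,1]$ with $p_1,p_2\in[0,1]$ as well, I would note that the left side of \cref{eq:quad} opens upward (assuming $x_3<1$, else $\ket{a_1}=\ket{a_2}$ and we are in a degenerate case) and is nonnegative at the abscissae $p_3=0,\,x_1,\,1-x_2,\,1$ — with values $x_1(1-x_2)$, $x_1x_3(1-x_1)$, $x_2x_3(1-x_2)$, $x_2(1-x_1)$ respectively — and that its vertex lies in $[x_1,\,1-x_2]\subseteq[0,1]$ precisely when \cref{Eq:Anti1} holds (the two ways the vertex could fall outside simplify to $x_1+x_2+x_3>1+2x_1x_3$ and $x_1+x_2+x_3>1+2x_2x_3$, both impossible under \cref{Eq:Anti1}). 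Combining these, under \cref{Eq:Anti1,Eq:Anti2} both roots of \cref{eq:quad} lie in $[x_1,1-x_2]$, whence $p_2=x_1/p_3\in[0,1]$ and $p_1=x_2/(1-p_3)\in[0,1]$ automatically (with the obvious readings $p_2=0$ or $p_1=0$ when $x_1$ or $x_2$ vanishes); conversely any solution of \cref{eq:param} forces $x_1+x_2+x_3\le1$, with equality only when the three states are linearly dependent, so a linearly independent antidistinguishable triple satisfies \cref{Eq:Anti1} strictly. This settles both directions in the principal case, after which the rank-$\le2$ configurations are finished off by direct inspection.

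I expect the main obstacle to be precisely this last step: turning the bare existence of a real root of \cref{eq:quad} into the existence of a genuine solution of \cref{eq:param} inside the unit cube (which is where \cref{Eq:Anti1} enters), and the parallel but fiddlier treatment of the linearly dependent configurations, where one must be careful about the strictness of \cref{Eq:Anti1}. The conceptual heart — the factorized parametrization \cref{eq:param} obtained from the antidistinguishing basis — is clean; everything after it is bookkeeping and a discriminant computation.
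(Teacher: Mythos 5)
You should first note that the paper does not actually prove \cref{Thm:Anti3}: it is imported from Caves, Fuchs and Schack \cite{Caves_ConditionsCompatibilityQuantumstate_2002}, so there is no in-paper argument to compare against and your proposal has to stand on its own. Its core does stand. Reducing antidistinguishability (via \cref{Def:AntiDist}) to the existence of an orthonormal triple $\{\ket{e_1},\ket{e_2},\ket{e_3}\}$ whose span contains the three states and with $\braket{e_j|a_j}=0$ is correct; the factorized overlaps $x_1=p_2p_3$, $x_2=p_1(1-p_3)$, $x_3=(1-p_1)(1-p_2)$ are right; the eliminated quadratic $(1-x_3)p_3^2-(1+x_1-x_2-x_3)p_3+x_1(1-x_2)=0$ really does have discriminant $(x_1+x_2+x_3-1)^2-4x_1x_2x_3$; your endpoint values $x_1(1-x_2)$, $x_1x_3(1-x_1)$, $x_2x_3(1-x_2)$, $x_2(1-x_1)$ and the two vertex conditions check out, so under \cref{Eq:Anti1,Eq:Anti2} both roots lie in $[x_1,1-x_2]$ and the system is solvable in the unit cube; and the Gram-matrix/Bargmann-phase step legitimately converts solvability back into antidistinguishability of the given triple. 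For linearly independent triples this is a complete and correct argument, in the same spirit as the cited derivation.

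The genuine gap is precisely the piece you defer. The rank-$\le 2$ configurations are not routine bookkeeping, because there the biconditional as stated (with strict inequality in \cref{Eq:Anti1}) is not something ``direct inspection'' can confirm: it fails. Take $\ket{a_1}=\ket{0}$, $\ket{a_2}=\ket{1}$, $\ket{a_3}=(\ket{0}+\ket{1})/\sqrt{2}$ in $\mathbb{C}^3$. The basis $\ket{a_1^\perp}=\ket{1}$, $\ket{a_2^\perp}=\ket{0}$, $\ket{a_3^\perp}=\ket{2}$ satisfies \cref{Def:AntiDist} (your parametrization is solved by $(p_1,p_2,p_3)=(1,1,\tfrac12)$), so the triple is antidistinguishable, yet $x_1+x_2+x_3=\tfrac12+\tfrac12+0=1$, violating \cref{Eq:Anti1}. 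The same happens for any triple containing an orthogonal pair whose span contains the third state. Your own analysis already points at this: solutions of the system only force $x_1+x_2+x_3\le 1$, with equality attainable exactly in linearly dependent configurations, and some of those boundary configurations are genuinely antidistinguishable. So the necessity of strict \cref{Eq:Anti1} holds only under a linear-independence (spanning) hypothesis; for degenerate triples the correct criterion is the non-strict version of \cref{Eq:Anti1} together with \cref{Eq:Anti2}. To finish honestly you should either carry the spanning assumption explicitly or prove and state the non-strict form (the paper's applications only invoke the sufficiency direction, e.g.\ via \cref{Cor:Anti}, so they are unaffected); as planned, the deferred ``short separate argument'' would be attempting to verify a claim that is false in that regime.
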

The following corollary, as stated in \cite{Havlicek_2019}, gives a simpler sufficient condition for antidistinguishability that is easier to check.  It follows by substitution into \cref{Eq:Anti1} and \cref{Eq:Anti2}.
\begin{corollary}
    \label{Cor:Anti}
Consider a set $\mathcal{A} = \{\ket{a_1},\ket{a_2},\ket{a_3}\}$ of three states and let $x_1 = \vert \braket{a_2 | a_3} \vert^2$, $x_2 = \vert \braket{a_1 | a_3} \vert^2$, $x_3 = \vert \braket{a_1 | a_2} \vert^2$.  Then, $\mathcal{A}$ is antidistinguishable if
\begin{equation}
    x_1,x_2,x_3 \leq \frac{1}{4}.
\end{equation}
\end{corollary}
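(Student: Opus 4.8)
The plan is to reduce the statement entirely to \cref{Thm:Anti3}, which already translates antidistinguishability of a triple of states into the two scalar conditions \eqref{Eq:Anti1} and \eqref{Eq:Anti2} on the pairwise overlaps $x_1,x_2,x_3$. Thus it is enough to check that the hypothesis $x_1,x_2,x_3 \le \tfrac{1}{4}$ implies both of those inequalities.

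First I would dispatch \eqref{Eq:Anti1}: adding the three bounds gives $x_1 + x_2 + x_3 \le \tfrac{3}{4} < 1$, which is already strictly stronger than required. For \eqref{Eq:Anti2}, set $s = x_1 + x_2 + x_3$. Since $s \le \tfrac{3}{4}$ we have $s - 1 \le -\tfrac{1}{4}$, hence $|s - 1| \ge \tfrac{1}{4}$ and therefore $(s-1)^2 \ge \tfrac{1}{16}$. On the other hand, the product bound gives $4 x_1 x_2 x_3 \le 4\left(\tfrac{1}{4}\right)^{3} = \tfrac{1}{16}$. Combining, $(x_1 + x_2 + x_3 - 1)^2 \ge \tfrac{1}{16} \ge 4 x_1 x_2 x_3$, which is precisely \eqref{Eq:Anti2}. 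Both hypotheses of \cref{Thm:Anti3} are therefore satisfied, so $\mathcal{A}$ is antidistinguishable.

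There is essentially no obstacle here — the argument is a two-line substitution — but the one point that requires a little care is the direction of the inequality in passing from $s \le \tfrac{3}{4}$ to $(s-1)^2 \ge \tfrac{1}{16}$: since $s - 1$ is negative, a smaller $s$ makes $(s-1)^2$ \emph{larger}, so the bound on $s$ must be used to lower-bound $|s-1|$, not to upper-bound it. It is worth noting that both estimates become equalities at $x_1 = x_2 = x_3 = \tfrac{1}{4}$, so the cube $[0,\tfrac{1}{4}]^{3}$ meets the boundary of the region delimited by \cref{Thm:Anti3} exactly at that corner; the constant $\tfrac{1}{4}$ thus cannot be raised uniformly, even though the condition is of course very far from necessary away from that point.
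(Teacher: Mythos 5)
Your proof is correct and follows exactly the route the paper intends: the paper simply remarks that \cref{Cor:Anti} ``follows by substitution into \cref{Eq:Anti1} and \cref{Eq:Anti2},'' and your two estimates ($x_1+x_2+x_3\leq \tfrac{3}{4}<1$ and $(x_1+x_2+x_3-1)^2\geq \tfrac{1}{16}\geq 4x_1x_2x_3$) are precisely that substitution carried out. Your closing observation about tightness at $x_1=x_2=x_3=\tfrac{1}{4}$ is a nice extra, but the argument itself is the same as the paper's.
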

Additional criteria for antidistinguishability have been proved for more general cases \cite{Bandyopadhyay_ConclusiveExclusionQuantum_2014,Heinosaari_AntidistinguishabilityPureQuantum_2018}, but we shall not need them here.

\section{Noncontextuality Inequalities from Antidistinguishability}
\label{Ineq}

This section describes our main results.  We can use the concept of antidistinguishability to derive noncontextuality inequalities based on \emph{pairwise antisets}.  These come in two versions---strong and weak---which are used to derive state independent and state dependent inequalities respectively.

The notion of a weak pairwise antiset, applied to states rather than outcomes and not explicitly named, was used in \cite{Barrett_2014} to derive overlap bounds on the reality of the quantum state.  Other examples of this construction were given in \cite{Branciard_HowPsepistemicModels_2014, Knee_2017}.  In light of our results, these bounds can now be reinterpreted as state dependent noncontextuality inequalities.  The notion of a strong pairwise antiset is novel to this work, and allows us to show that some of these inequalities are actually state independent.

After defining pairwise antisets and stating our main results, \S\ref{Exa} gives examples of our construction for quantum contextuality scenarios.  The proof of our main results is given in \S\ref{Proof}.

\begin{definition}\label{Def:StrongPairwiseAS}
    A \emph{strong pairwise antiset} $W$ in a contextuality scenario $\mathfrak{C} = (X,\mathcal{M},\mathcal{N})$ is a set of outcomes for which there exists a context $M \in \mathcal{M}$ such that, for every $a,b \in W$ and $c \in  M$, the triple $\{a,b,c\}$ is antidistinguishable.
    
    The context $M$ is called a \emph{principal context} for the pairwise antiset $W$.
\end{definition}

\begin{definition}
    A \emph{weak pairwise antiset} $W$ in a contextuality scenario $\mathfrak{C} = (X,\mathcal{M},\mathcal{N})$ is a set of outcomes for which there exists another outcome $c \in X$ such that, for every $a,b \in W$, the triple $\{a,b,c\}$ is antidistinguishable.
    
    The outcome $c$ is called a \emph{principal outcome} for the pairwise antiset $W$.
\end{definition}

We are now in a position to state our main results.

\begin{restatable}{thm}{mainresult}
\label{Thm:MainResult}
    Let $W$ be a pairwise antiset in a contextuality scenario $\mathfrak{C} = (X,\mathcal{M},\mathcal{N})$.  If $W$ is strong then any state $\omega \in C_{\mathfrak{C}}$ satisfies
\begin{equation}
    \label{eq:mainresult}
    \sum_{a \in W} \omega(a) \leq 1.
\end{equation}
    If $W$ is weak then any $\omega \in C_{\mathfrak{C}}$ that also satisfies $\omega(c) = 1$ for a principal outcome $c$ satisfies \cref{eq:mainresult}.
\end{restatable}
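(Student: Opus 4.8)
The plan is to reduce \cref{Thm:MainResult} to a statement about value functions and then read off a purely combinatorial property of antidistinguishable sets. Since a KS noncontextual state is by definition a convex combination $\omega(a)=\sum_{v\in V_{\mathfrak C}}p_v v(a)$, linearity gives $\sum_{a\in W}\omega(a)=\sum_{v}p_v\bigl(\sum_{a\in W}v(a)\bigr)$, exactly as in the Klyachko example. So it suffices to bound $\sum_{a\in W}v(a)$ for the relevant value functions: all $v\in V_{\mathfrak C}$ in the strong case and, in the weak case, all $v$ with $p_v>0$. In the weak case these are precisely the $v$ with $v(c)=1$, because $1=\omega(c)=\sum_v p_v v(c)\le\sum_{v:v(c)=1}p_v\le\sum_v p_v=1$ forces $p_v=0$ whenever $v(c)=0$. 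As $v$ is $\{0,1\}$-valued, the target inequality $\sum_{a\in W}v(a)\le 1$ simply says that $v$ assigns the value $1$ to at most one outcome of $W$.

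The heart of the argument is the following lemma: if $\{a_1,\dots,a_n\}\subseteq X$ is antidistinguishable then no value function assigns $1$ to all of $a_1,\dots,a_n$. To prove it, take the witnessing outcomes $a_1^{\perp},\dots,a_n^{\perp}$ and the context $M'\in\mathcal M$ with $\{a_1^{\perp},\dots,a_n^{\perp}\}\subseteq M'$ guaranteed by the first clause of \cref{def:Anti}. A value function $v$ assigns $1$ to exactly one element of $M'$. If that element is some $a_j^{\perp}$, then the second clause places $\{a_j,a_j^{\perp}\}$ inside a context or maximal partial context, forcing $v(a_j)=0$. If instead that element is some $a\in M'\setminus\{a_1^{\perp},\dots,a_n^{\perp}\}$, then the third clause places each pair $\{a,a_j\}$ inside a context or maximal partial context, forcing $v(a_j)=0$ for every $j$. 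Either way some $a_j$ receives the value $0$, which proves the lemma.

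With the lemma in hand the two cases of \cref{Thm:MainResult} follow quickly. In the strong case, fix a principal context $M$ for $W$; a value function $v$ assigns $1$ to a unique $c\in M$. Were there two distinct $a,b\in W$ with $v(a)=v(b)=1$, then $\{a,b,c\}$ is antidistinguishable by \cref{Def:StrongPairwiseAS} while $v$ assigns $1$ to all of its elements, contradicting the lemma; hence $\sum_{a\in W}v(a)\le 1$. The weak case is identical, with the variable $c\in M$ replaced by the fixed principal outcome $c$, which we have already forced to satisfy $v(c)=1$.

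The step I expect to need the most care is the lemma's case analysis, and in particular the degenerate configurations in which the outcome of $M$ singled out by $v$ (strong case), or the principal outcome $c$ (weak case), coincides with one of the two outcomes $a,b\in W$, so that the ``triple'' $\{a,b,c\}$ is actually a two-element set. The argument should be arranged so that the $n=2$ instance of the lemma is a genuine special case of the very same reasoning; this is also the point at which it becomes visible that the third clause of \cref{def:Anti} is doing real work rather than being a technicality.
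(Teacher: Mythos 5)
Your proof is correct, and while its core combinatorial ingredient is exactly the paper's \cref{Lem:AntiImpliesEmpty} (your case analysis on which element of the antidistinguishing context receives the value $1$ is a valid, slightly reorganized version of the paper's argument), your derivation of the inequality itself takes a genuinely different and more elementary route. The paper passes to the probability space over value functions and invokes the second Bonferroni inequality, reducing everything to showing that the pairwise intersections $V_{a_j}\cap V_{a_k}$ have probability zero (via a partition over the principal context in the strong case, and via $P(V_c)=1$ in the weak case). You instead bound each value function directly: every $v$ (every $v$ in the support of $p$, in the weak case, where your observation that $\omega(c)=1$ forces $p_v=0$ whenever $v(c)=0$ is exactly right) assigns the value $1$ to at most one element of $W$, because two hits in $W$ together with the uniquely selected element of the principal context (or with the principal outcome $c$) would produce an antidistinguishable set entirely valued $1$, contradicting the lemma; convexity then gives $\sum_{a\in W}\omega(a)\le 1$. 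This avoids the inclusion--exclusion machinery altogether and makes the strong-case conclusion slightly sharper (the intersections $V_{a_j}\cap V_{a_k}$ are actually empty, not merely null), at the cost of not exhibiting the measure-theoretic form of the argument that connects these inequalities to the overlap-bound literature from which they originate. Your handling of the degenerate configuration where the selected context element coincides with $a$ or $b$ is also sound: the collapsed set $\{a,b\}$ is still antidistinguishable by the hypothesis of the (strong or weak) antiset definition, and \cref{Lem:AntiImpliesEmpty} applies to antidistinguishable sets of any size.
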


\section{Examples}
\label{Exa}

Before proving \cref{Thm:MainResult}, here are some interesting examples of pairwise antisets that occur in quantum contextuality scenarios and the noncontextuality inequalities that arise from them.

\subsection{Strong Pairwise Antisets}

In this section, we give examples of strong pairwise antisets and state independent inequalities.

\begin{example}[The Yu-Oh inequality]
As a first example, we re-derive a noncontextuality inequality first given in \cite{Yu_StateIndependentProofKochenSpecker_2012} using \cref{Thm:MainResult}.
Consider the following four vectors in $\mathbb{C}^3$
\begin{align}
\ket{a_1}= \frac{1}{\sqrt{3}}\begin{pmatrix}
           1 \\
           1 \\
           1
         \end{pmatrix}, \,\,
\ket{a_2}= \frac{1}{\sqrt{3}}\begin{pmatrix}
           -1 \\
           1 \\
           1
         \end{pmatrix}, \nonumber \\
\ket{a_3}= \frac{1}{\sqrt{3}}\begin{pmatrix}
           1 \\
           -1 \\
           1
         \end{pmatrix}, \,\,
\ket{a_4}= \frac{1}{\sqrt{3}}\begin{pmatrix}
           1 \\
           1 \\
           -1
         \end{pmatrix}.
\label{Eq:YuOhRays}
\end{align}
These form a strong pairwise antiset with principal basis
\begin{align}
\ket{c_1}= \begin{pmatrix}
           1 \\
           0 \\
           0
         \end{pmatrix}, \,\,
\ket{c_2}= \begin{pmatrix}
           0 \\
           1 \\
           0
         \end{pmatrix}, \,\,
\ket{c_3}= \begin{pmatrix}
           0 \\
           0 \\
           1
         \end{pmatrix}.
\label{Eq:YuOhBasis}
\end{align}

The triple $\{\ket{c_1},\ket{a_1},\ket{a_2}\}$ was shown to be antidistinguishable in \cref{exa:qanti}.  The other triples $\{\ket{c_j},\ket{a_k},\ket{a_m}\}$ for $k\neq m$ are antidistinguishable because they have the same inner products so they satisfy the conditions of \cref{Thm:Anti3}.  \Cref{Thm:MainResult} thus implies the noncontextuality inequality
\begin{equation}
    \label{Eq:YuOhIneq}
    \sum_{j=1}^4 \omega(a_j) \leq 1.
\end{equation}
However, the four states $\ket{a_j}$ satisfy
\[\sum_{j=1}^4 \ket{a_j}\bra{a_j} = \frac{4}{3} I,\]
where $I$ is the identity operator.  This implies that for any quantum state $\omega$, the quantum predictions are
\begin{equation}
    \sum_{j=1}^4 \omega(a_j) = \frac{4}{3} > 1.
\end{equation}
\end{example}

In \cite{Yu_StateIndependentProofKochenSpecker_2012}, the inequality of \cref{Eq:YuOhIneq} was derived by applying an exhaustive search over noncontextual assignments to the orthogonality graph of $13$ rays in $\mathbb{C}^3$ \footnote{See \cite{Cabello_ProposedExperimentsQutrit_2012} for more details.}.  Here we only used $7$ rays, but the other rays used in \cite{Yu_StateIndependentProofKochenSpecker_2012} are just the elements of the orthonormal bases that are required to antidistinguish the triples used in our argument.  Re-deriving the inequality using \cref{Thm:MainResult} shows that it was based on antidistinguishability all along, and this allows us to easily generalize the example. 

\begin{example}[Hadamard States]
The Yu-Oh construction can be generalized as follows.  Consider the following vectors in $\mathbb{C}^d$:
\begin{equation}
    \ket{a_{\bm{x}}}= \frac{1}{\sqrt{d}}\begin{pmatrix}
       -1^{x_1} \\
       -1^{x_2} \\
       -1^{x_3} \\
       \vdots \\
       -1^{x_d}
    \end{pmatrix},
\end{equation}
where $\bm{x}=(x_1,\cdots,x_d)$ is a binary vector in $\{0,1\}^{d}$.  This means that, ignoring normalization for the moment, the components of $\ket{a_{\bm{x}}}$ are all either $+1$ or $-1$ and as we run through the possible vectors $\bm{x}$ we get all possible combinations of $\pm 1$ components.  There are $2^d$ such vectors.  These vectors are called \emph{Hadamard states} because they can be thought of as the possible columns of Hadamard matrices.  In addition, let $\{\ket{0},\ket{1},\cdots,\ket{d-1}\}$ be the standard orthonormal basis for $\mathbb{C}^d$, which we will use as the principal basis (in the sense of def. \ref{Def:StrongPairwiseAS}). 

Now, obviously, not all triples $\{\ket{j},\ket{a_{\bm{x}}},\ket{a_{\bm{x}'}}\}$ are antidistinguishable because some pairs $\ket{a_{\bm{x}}},\ket{a_{\bm{x'}}}$ only differ by a phase, i.e.\ $\ket{a_{\bm{x}'}} = -\ket{a_{\bm{x}}}$.  In this case, $ \left | \braket{a_{\bm{x}} | a_{\bm{x}'}} \right |^2 = 1$ and so \cref{Eq:Anti1} of \cref{Thm:Anti3} is not satisfied.  We can eliminate such cases by only considering binary vectors $\bm{x}$ that begin with a $0$.  Denote this set $B^d_0$ and the set of binary strings that begin with a $1$ by $B^d_1$.  Both sets contain $2^{d-1}$ vectors.

Restricting to $B^d_0$, the triples $\{\ket{j},\ket{a_{\bm{x}}},\ket{a_{\bm{x'}}}\}$ satisfy the conditions of \cref{Thm:Anti3} for $\bm{x} \neq \bm{x}'$ and so \cref{Thm:MainResult} implies that noncontextual states satisfy
\begin{equation}
    \label{Eq:HadamardIneq1}
    \sum_{\bm{x} \in B_0^d} \omega(a_{\bm{x}}) \leq 1.
\end{equation}

Since the vectors in $B_1^d$ represent the same set of rays, we can run the same argument and obtain
\begin{equation}
    \label{Eq:HadamardIneq2}
    \sum_{\bm{x} \in B_1^d} \omega(a_{\bm{x}}) \leq 1.
\end{equation}
Adding the two inequalities gives
\begin{equation}
    \label{Eq:HadamardIneq3}
    \sum_{\bm{x} \in \{0,1\}^d} \omega(a_{\bm{x}}) \leq 2.
\end{equation}
Although it is not necessary to add the inequalities like this, it is a bit cleaner to work with the full set of vectors of size $2^d$ rather than two sets of size $2^{d-1}$.

For the quantum probabilities we note that 
\[\left ( \sum_{\bm{x} \in \{0,1\}^d} \ket{a_{\bm{x}}}\bra{a_{\bm{x}}} \right )_{jk} = \frac{1}{d} \sum_{\bm{x} \in \{0,1\}^d} (-1)^{x_j + x_k}.\]
For $j=k$, each term in the sum is $+1$, so the diagonal components are all $2^d/d$.  For $j\neq k$, the off-diagonal components are all $0$ because there are as many vectors in which $x_j = x_k$ as there are in which $x_j \neq x_k$ so there are an equal number of $+1$'s and $-1$'s in the sum.

Thus, we have
\[\sum_{\bm{x} \in \{0,1\}^d} \ket{a_{\bm{x}}}\bra{a_{\bm{x}}} = \frac{2^d}{d} I,\]
so the probabilities for any quantum state $\omega$ are
\begin{equation}
    \sum_{\bm{x}\in\{0,1\}^d} \omega(a_{\bm{x}}) = \frac{2^d}{d},
\end{equation}
This is larger than $2$ whenever $d \geq 3$, which yields another state independent contextuality proof.
\end{example}

Hadamard states, combined with the Frankl-R\"{o}dl theorem \cite{Frankl_1987}, have previously been used to prove noncontextuality inequalities and to bound quantum information protocols \cite{Buhrman_1998, Brassard_1999, Mancinska_2013}.   From a modern perspective, this amounts to considering the orthogonality properties of Hadamard states instead of their antidistinguishability, and applying the CSW formalism \cite{Cabello_NonContextualityPhysical_2010,Cabello_GraphTheoreticApproachQuantum_2014}.  From this, we find that there exists an $\epsilon > 0$ such that
\begin{equation}
    \label{Eq:HadamardIneq4}
    \sum_{\bm{x} \in \{0,1\}^d} \omega(a_{\bm{x}}) \leq (2 - \epsilon)^d,
\end{equation}
for every $\omega \in C_{\mathfrak{C}}$.
While this also proves contextuality for sufficiently large $d$, the bound is a lot larger than that of \cref{Eq:HadamardIneq3}, which shows the benefit of considering antidistinguishability.

In \cite{Leifer_PsEpistemicModelsAre_2014}, one of the authors of the present paper used the noncontextuality inequality of \cref{Eq:HadamardIneq4} to derive an overlap bound constraining $\psi$-epistemic models.  It was subsequently pointed out by Maroney \cite{Maroney_2014} and Branciard \cite{Branciard_HowPsepistemicModels_2014} that the overlap bound could be tightened along the lines of \cref{Eq:HadamardIneq3} using antidistinguishability.  The innovation here is to recognize that \cref{Eq:HadamardIneq3} is also a noncontextuality inequality.

The next example was also first proposed as an overlap bound in \cite{Barrett_2014}, which we can now recognize as a noncontextuality inequality.  

\begin{example}[Mutually Unbiased Basis (MUBs)]
Two orthonormal bases $\{\ket{e_j}\}_{j=1}^d$ and $\{\ket{f_j}\}_{j=1}^d$ in $\mathbb{C}^d$ are \emph{mutually unbiased} if $\left | \braket{e_j | f_k} \right |^2 = 1/d$ for all $j$ and $k$.  When $d$ is a prime power, then $d+1$ mutually unbiased bases are known to exist \cite{Bengtsson_2007}.  Let $\{\ket{a_{jk}}\}$ be the set of all vectors that appear in one of these basis, where $j$ runs over the choice of basis from $1$ to $d+1$ and $k$ runs over the vectors within a basis from $1$ to $d$.  We remove one basis, say $ \{\ket{a_{1k}}\}_{k=1}^d$, to be our principal basis, so there are $d^2$ vectors left in the set.

We have $\left | \braket{a_{jk}|a_{j'k'}} \right |^2 = \delta_{jj'}\delta_{kk'} + (1-\delta_{jj'})\frac{1}{d}$ and, for $d\geq 4$, \cref{Cor:Anti} implies that $\{\ket{a_{1k}},\ket{a_{j'k'}},\ket{a_{j''k''}}\}$ is antidistinguishable whenever $j'k'$ is distinct from $j''k''$ and $j',j'' \neq 1$.  Thus, we have a strong pairwise antiset so \cref{Thm:MainResult} implies that
\begin{equation}
     \left [ \sum_{j=1}^{d+1} \sum_{k=1}^d \omega(a_{jk}) \right ] - \sum_{k=1}^d \omega(a_{1k}) \leq 1,
\end{equation}
for any state $\omega \in C_{\mathfrak{C}}$. In fact, since $\{a_{1k}\}_{k=1}^d$ is a context, we have $\sum_{k=1}^d \omega(a_{1k}) = 1$ for any state $\omega$, so we have
\begin{equation}
    \label{Eq:MUBIneq}
     \sum_{j=1}^{d+1} \sum_{k=1}^d \omega(a_{jk}) \leq 2.
\end{equation}

Since $\{\ket{a_{jk}}\}_{k=1}^d$ is an orthonormal basis, we have
\[\sum_{j=1}^{d+1} \sum_{k=1}^d \ket{a_{jk}}\bra{a_{jk}} = (d+1) I,\]
so the quantum probabilities are
\begin{equation}
    \sum_{j=1}^{d+1} \sum_{k=1}^d \omega(a_{jk}) = d + 1,
\end{equation}
for any quantum state $\omega \in Q_{\mathfrak{C}}$.
This violates \cref{Eq:MUBIneq} for $d \geq 3$, but recall that the antidistinguishability conditions only hold for $d \geq 4$, so this is a contextuality proof for prime power $d \geq 4$.
\end{example}

\subsection{Weak Pairwise Antisets}

In this section, we give examples of state dependent noncontextuality inequalities arising from weak pairwise antisets.

The following simple example is due to Owen Maroney \cite{Maroney_2014}.

\begin{example}[Maroney States]
Consider the following vectors in $\mathbb{C}^d$
\begin{equation}
    \ket{a_j} = \frac{1}{\sqrt{3}} \ket{0} + \sqrt{\frac{2}{3}} \ket{j},
\end{equation}
where $j$ runs from $1$ to $d-1$ and we denote the standard orthonormal basis vectors as $\ket{0},\ket{1},\cdots,\ket{d-1}$.  We also set $\ket{c} = \ket{0}$.

Using \cref{Thm:Anti3}, we can easily check that $\{\ket{c},\ket{a_j},\ket{a_k}\}$ is antidistinguishable for $j \neq k$, so we have a weak pairwise antiset $W = \{\ket{a_j}\}_{j=1}^{d-1}$ and principal outcome $\ket{c}$. \Cref{Thm:MainResult} then gives
\begin{equation}
    \sum_{j=1}^{d-1} \omega(a_j) \leq 1,
\end{equation}
for any noncontextual state $\omega$ such that $\omega(c)=1$.

The quantum state $\omega$ corresponding to the vector $\ket{c} = \ket{0}$ obviously satisfies $\omega (c) = 1$ and it has $\omega(a_j) = \left | \braket{a_j|c}\right |^2 = 1/3$ for all $j$ so we get
\begin{equation}
    \sum_{j=1}^{d-1} \omega(a_j) = \frac{d-1}{3}.
\end{equation}
This proves that $\omega$ is contextual in this scenario for $d \geq 5$.
\end{example}

\begin{example}[Symmetric Informationally Complete (SIC) POVMs]
A SICPOVM, or SIC for short, is a set of semi-positive operators $\{E_j\}_{j=1}^{d^2}$ on $\mathbb{C}^d$ that satisfy
\begin{equation}
    \label{eq:SICnorm}
    \sum_{j=1}^{d^2} E_j = I,
\end{equation}
and are of the form $E_j = \frac{1}{d} \ket{a_j}\bra{a_j}$ where
\begin{equation}
    \label{eq:SICdef}
    \left | \braket{a_j|a_k} \right |^2 = \frac{1}{d+1},
\end{equation}
for $j \neq k$.  SICs are conjectured to exist in all finite Hilbert space dimensions.  They have been shown to exist in all dimensions up to $d=151$ and in several larger dimensions up to $d=844$ \cite{Fuchs_2017}.

For a SIC, let $\ket{c} = \ket{a_1}$ and $W = \{\ket{a_j}\}_{j=2}^{d^2}$.  \Cref{Cor:Anti} implies that, for $d \geq 3$, the triples $\{\ket{c},\ket{a_j},\ket{a_k}\}$ are all antidistinguishable for $j\neq k$ and $j,k \neq 1$ so we have a weak pairwise antiset.  Thus, \cref{Thm:MainResult} implies that
\begin{equation}
     \left [ \sum_{j=1}^{d^2} \omega(a_j) \right ] - \omega(a_1) \leq 1,
\end{equation}
for any noncontextual state $\omega$ such that $\omega(c)=1$.

Since $c = a_1$, we obviously also have $\omega(a_1) = 1$, so
\begin{equation}
    \label{eq:SICInequality}
     \sum_{j=1}^{d^2} \omega(a_j) \leq 2,
\end{equation}
for any noncontextual state $\omega$ such that $\omega(c)=1$.

Now consider any quantum state $\omega$.  From \cref{eq:SICnorm}, we have
\[\sum_{j=1}^{d^2} \ket{a_j}\bra{a_j} = d I,\]
so the quantum predictions are
\begin{equation}
    \sum_{j=1}^{d^2} \omega(a_j) = d,
\end{equation}
If we also have $\omega(c)=1$, which is the case for the quantum state corresponding to $\ket{a_1}$ for example, then this state is contextual for $d \geq 3$.
\end{example}

For $d=3$, the inequality of \cref{eq:SICInequality} was derived as a \emph{state independent} contextuality inequality in \cite{Bengtsson_2012} based on a special relationship between MUBs and SICs that only occurs in that dimension.  They considered the orthogonality graph of $21$ vectors in $\mathbb{C}^3$ consisting of the vectors that appear in a SIC and those that appear in a related set of $4$ MUBs.  From our perspective, the special relationship is that, in $d=3$, MUBs can be chosen that antidistinguish each of the triples $\{\ket{c},\ket{a_j},\ket{a_k}\}$ used in our proof.  

Our generalization follows from the fact that these antidistinguishability relations still hold in higher dimensions, but the antidistinguishing measurements are no longer necessarily MUBs.  Unfortunately, our generalization is only a state dependent inequality, as we did not find a way of generating a principal context from a SIC\@.  This indicates that other methods of generating noncontextuality inequalities from antidistinguishability might exist.

\section{Proof of Theorem~\ref{Thm:MainResult}}

\label{Proof}

\mainresult*

The proof is based on one lemma and the Bonferroni inequalities \cite{Rohatgi_2001}.  

\begin{lem}
Let $A$ be a set of antidistinguishable outcomes in a contextuality scenario $\mathfrak{C} = (X,\mathcal{M},\mathcal{N})$.  Then, there are no value functions that are $a$-definite for every $a\in A$, i.e.
\begin{equation}
    \bigcap_{a \in A} V_a = \emptyset.
\end{equation}
\label{Lem:AntiImpliesEmpty}
\end{lem}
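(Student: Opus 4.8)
The plan is to argue directly from the definition of antidistinguishability (Definition~\ref{def:Anti}) together with the defining constraints on value functions. Suppose, for contradiction, that there exists a value function $v$ that is $a$-definite for every $a \in A$, where $A = \{a_1, \dots, a_n\}$. By Definition~\ref{def:Anti}, antidistinguishability of $A$ guarantees outcomes $a_1^\perp, \dots, a_n^\perp \in X$ and a context $M \in \mathcal{M}$ with $\{a_1^\perp, \dots, a_n^\perp\} \subseteq M$, such that each pair $\{a_j, a_j^\perp\}$ lies in some context or maximal partial context $N_j$, and each pair $\{a, a_j\}$ with $a \in M \setminus \{a_1^\perp, \dots, a_n^\perp\}$ lies in some context or maximal partial context. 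I will show these constraints force a contradiction on $v$.

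The key steps, in order, are as follows. First, since $v(a_j) = 1$ and $\{a_j, a_j^\perp\}$ is contained in a context or maximal partial context, the value-function axioms (``at most one'' for maximal partial contexts, ``exactly one'' for contexts) force $v(a_j^\perp) = 0$ for every $j \in [n]$. Second, for each remaining outcome $a \in M \setminus \{a_1^\perp, \dots, a_n^\perp\}$, the third clause of Definition~\ref{def:Anti} gives a context or maximal partial context containing $\{a, a_1\}$; since $v(a_1) = 1$, the same axiom forces $v(a) = 0$. Hence $v(a) = 0$ for \emph{every} $a \in M$. But $M \in \mathcal{M}$ is a genuine context, so the value-function axiom demands $v(a) = 1$ for exactly one $a \in M$ — a contradiction. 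Therefore no such $v$ exists, i.e.\ $\bigcap_{a \in A} V_a = \emptyset$.

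I do not anticipate a serious obstacle here: the argument is essentially a case check against the two value-function axioms, and the three clauses of Definition~\ref{def:Anti} were evidently designed precisely so that this argument goes through. The one point requiring minor care is handling the two possibilities (context versus maximal partial context) uniformly in the second and third steps — but in both cases the conclusion needed is only ``at most one outcome gets value $1$,'' which holds in either case, so the distinction is immaterial. A small remark worth including is that the third clause is exactly what rules out the ``trivial'' situation flagged after Example~\ref{Def:AntiDist}, where the $a_j^\perp$ could otherwise be chosen orthogonal to all of $A$ and carry no constraint; without it, the $a \in M \setminus \{a_j^\perp\}$ outcomes would be unconstrained and $v$ could consistently assign $1$ to one of them.
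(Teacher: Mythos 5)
Your proof is correct and follows essentially the same route as the paper's: use the shared (maximal partial) contexts from the second and third clauses of Definition~\ref{def:Anti} to force $v$ to assign $0$ to every element of the antidistinguishing context $M$, contradicting the ``exactly one'' axiom for contexts. The only difference is cosmetic ordering (you eliminate the $a_j^{\perp}$ first and the remaining elements of $M$ second, while the paper does the reverse), which does not affect the argument.
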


\begin{proof}
    Suppose $A = \{a_1,a_2,\cdots,a_n\}$ has $n$ outcomes and that $M = \{a_1^{\perp},a_2^{\perp},\cdots,a_m^{\perp}\}$ (with $m \geq n$) is a context that antidistinguishes them, i.e.
    \begin{itemize}
        \item For all $j \in [n]$, there exists a context or maximal partial context $N_j$ such that $a_j,a_j^{\perp} \in N_j$.
        \item For all $j \in [n]$ and $k \in [n+1,m]$, there exists a context or maximal partial context $N_{jk}$ such that $a_j,a_k^{\perp} \in N_{jk}$.
    \end{itemize}
    Since $\{a_1^{\perp},a_2^{\perp},\cdots,a_m^{\perp}\}$ is a context, every value function must assign the value $1$ to exactly one outcome in this set.  Consider a value function $v \in V_{a_1}$.  Since, for every $k \in [n+1,m]$, there is a (maximal partial) context that contains both $a_1$ and $a_k^{\perp}$, $v$ must assign value $0$ to every $a_k^{\perp}$ for $k \in [n+1,m]$.  This means that it must assign value $1$ to one of $a_1^{\perp}, a_2^{\perp},\cdots,a_n^{\perp}$.
    
    Now suppose that $v \in \bigcap_{a \in A} V_a$ so that it assigns value $1$ to every $a_j$ for $j \in [n]$.  It cannot assign $v(a_j^{\perp}) = 1$ for any $j = [n]$ because, for every such $j$, there is always a (maximal partial) context $N_j$ such that $a_j,a_j^{\perp} \in N_j$ and we already have $v(a_j) = 1$.  This means that the value function must assign value $0$ to every $a_j^{\perp}$ for $j \in [m]$, contradicting the requirement that it assign value $1$ to exactly one outcome in every context.  Therefore, no such value function exists, so $\bigcap_{a \in A} V_a = \emptyset$.
\end{proof}

\begin{proof}[Proof of \Cref{Thm:MainResult}]
Given a contextuality scenario $\mathfrak{C} = (X,\mathcal{M},\mathcal{N})$ and a noncontextual state $\omega \in C_{\mathfrak{C}}$, which is necessarily of the form
\[\omega(a) \sum_{v\in V_{\mathfrak{C}}} p_v v(a),\]
we can define a probability space $(V_{\mathfrak{C}},2^{V_{\mathfrak{C}}},P)$ over the value functions via
\[P(V) = \sum_{v\in V_{\mathfrak{C}}} p_v.\]

Now consider the quantity
\begin{equation}
    \sum_{a \in W} \omega(a) = \sum_{a \in W} \sum_{v\in V_{\mathfrak{C}}} p_v v(a).
\end{equation}
Because $v(a) = 1$ iff $v \in V_a$ and $v(a) = 0$ otherwise, we can rewrite this as
\begin{equation}
    \label{eq:OmegaP}
    \sum_{a \in W} \omega(a) = \sum_{a \in W}  P(V_a).
\end{equation}

Next, we make use of the Bonferroni inequalities \cite{Rohatgi_2001}.  Recall that the Bonferroni inequalities are a generalization of the inclusion-exclusion principle to probability spaces.  For a probability space $(\Omega,\Sigma,P)$, let $\Omega_1,\Omega_2,\cdots,\Omega_n \in \Sigma$ be measurable sets.  Then, we have the sequence of inequalities:
\begin{align}
    P\left ( \bigcup_{j=1}^n \Omega_j \right ) & \leq \sum_{j = 1}^n P \left ( \Omega_j \right ) \\
    P\left ( \bigcup_{j=1}^n \Omega_j \right ) & \geq \sum_{j = 1}^n P \left ( \Omega_j \right ) - \sum_{j < k} P \left ( \Omega_j \cap \Omega_k \right ) \label{eq:Bonf} \\
    P\left ( \bigcup_{j=1}^n \Omega_j \right ) & \leq \sum_{j = 1}^n P \left ( \Omega_j \right ) - \sum_{j < k} P \left ( \Omega_j \cap \Omega_k \right ) \nonumber \\ & + \sum_{j < k < l} P \left ( \Omega_j \cap \Omega_k \cap \Omega_l \right ) \\
    \vdots & \nonumber
\end{align}
The pattern continues with alternating signs of the additional terms and alternating directions of the inequalities.  Here, we will make use of the second Bonferroni inequality given in \cref{eq:Bonf}.

Suppose that the pairwise antiset $W$ has $n$ outcomes $W = \{a_1,a_2,\ldots,a_n\}$ and consider the corresponding sets of $a_j$-definite value functions $V_{a_1},V_{a_2},\cdots,V_{a_n}$.  By the second Bonferroni inequality we have
\begin{equation}
     P\left ( \bigcup_{j=1}^n V_{a_j} \right ) \geq \sum_{j = 1}^n P \left ( V_{a_j} \right ) - \sum_{j < k} P \left ( V_{a_j} \cap V_{a_k} \right ).
\end{equation}
Combining this with \cref{eq:OmegaP} and rearranging gives
\begin{equation}
    \sum_{j=1}^n \omega(a_j) \leq P\left ( \bigcup_{j=1}^n V_{a_j} \right ) + \sum_{j < k} P \left ( V_{a_j} \cap V_{a_k} \right ).
\end{equation}
Because $P$ is a probability measure, we have $P\left ( \bigcup_{j=1}^n V_{a_j} \right ) \leq 1$, so 
\begin{equation}
    \sum_{j=1}^n \omega(a_j) \leq 1 + \sum_{j < k} P \left ( V_{a_j} \cap V_{a_k} \right ).
\end{equation}
Therefore, the theorem follows if we can show that $P \left ( V_{a_j} \cap V_{a_k} \right ) = 0$ for every $j\neq k$.  To do this, we consider the cases of strong and weak pairwise antisets separately.

In the strong case, consider a principal context $M$.  Since it is a context the sets $V_c$ for $c \in M$ are disjoint and form a partition of $V_{\mathfrak{C}}$, so $V_{\mathfrak{C}} = \bigcup_{c \in M} V_c$.  Also, $P(V_{\mathfrak{C}}) = 1$, so we have
\begin{align}
    P \left ( V_{a_j} \cap V_{a_k} \right ) & = P \left ( V_{\mathfrak{C}} \cap \left [ V_{a_j} \cap V_{a_k} \right ] \right )\\
    & = P \left ( \left [ \bigcup_{c \in M} V_c \right ] \cap \left [ V_{a_j} \cap V_{a_k} \right ] \right ) \\
    & = \sum_{c \in M} P \left (  V_c \cap V_{a_j} \cap V_{a_k}  \right ),
\end{align}
where the last line follows from disjointness of the $V_c$'s.  Since each triple $\{c,a_j,a_k\}$ is antidistinguishable, \cref{Lem:AntiImpliesEmpty} implies that $V_c \cap V_{a_j} \cap V_{a_k} = \emptyset$, and hence $P(V_c \cap V_{a_j} \cap V_{a_k}) = 0$.

Now consider the weak case.  Let $c$ be a principal outcome and suppose that $\omega(c) = 1$.  Since $\omega(c) = P(V_c)$ it follows that $P(V_c) = 1$.  Thus, we can write
\begin{align}
    P \left ( V_{a_j} \cap V_{a_k} \right ) = P \left ( V_c \cap V_{a_j} \cap V_{a_k} \right ),
\end{align}
and since each triple $\{c,a_j,a_k\}$ is antidistinguishable, \cref{Lem:AntiImpliesEmpty} implies that $V_c \cap V_{a_j} \cap V_{a_k} = \emptyset$, and hence $P(V_c \cap V_{a_j} \cap V_{a_k}) = 0$.
\end{proof}

\section{Conclusions}

\label{Conc}

In this paper, we have shown that the antidistinguishability properties of sets of quantum states, and more abstractly outcomes in a contextuality scenario, can be used to derive noncontextuality inequalities.  Our method can be used to re-derive some known inequalities, such as the Yu-Oh inequality \cite{Yu_StateIndependentProofKochenSpecker_2012}, in a simple way that uncovers the previously hidden antidistinguishability structure of the proof.  It can also be used to generalize known inequalities to higher dimensions, such as in the Hadamard and SIC examples, and derive new classes of noncontextuality inequalities, such as the example based on MUBs.  In some cases, we get much tighter bounds on the inequalities than we would get from considering the distinguishability properties alone, such as in the Hadamard example.  Our method is not necessarily the only way of deriving noncontextuality inequalities from antidistinguishability, and we think there is much to be gained from considering antidistinguishability structures further, particularly given their role in some recently proposed quantum information protocols \cite{Perry_CommunicationTasksInfinite_2015, Havlicek_2019}.

In principle, our noncontextuality inequalities could be made robust to noise and tested experimentally using the techniques described in \cite{Kunjwal_2018, ADO18}.  However, in order to do so, one would have to experimentally test that the antidistinguishabilities used in the proofs hold approximately in the lab.  This would involve constructing the bases that antidistinguish the states in our pairwise antisets, increasing the number of vectors needed to establish the proof.  It would then essentially reduce to a proof based on the orthogonality properties of the states.  From a theoretical point of view, one of the virtues of our method is that you do not have to explicitly construct the antidistinguishing measurements, so we can derive our inequalities using a smaller number of vectors than would be needed in methods based on orthogonality.  This advantage would be lost in the experimental tests.

Thus, we think the main use of our method will be in theoretical work, where contextuality inequalities can be used to prove things about quantum computation and quantum information protocols.  As an example of this, the amount of memory needed to classically simulate stabilizer quantum computations was recently bounded using contextuality proofs based on antidistinguishability \cite{Karanjai_2018}.  We expect that having a general method of constructing inequalities based on antidistinguishability could be used to prove similar and more general results for other classes of quantum computation.

Our work also has implications for thinking about overlap bounds on the reality of the quantum state.  One known class of bounds is based on CSW noncontextuality inequalities, but the other class---based on antidistinguishability---did not previously have a known connection to contextuality.  In this paper, we have shown that this second class of bounds are also noncontextuality inequalities.  It has been shown that a maximally $\psi$-epistemic model (one in which the quantum and classical overlaps are equal) must be noncontextual \cite{Leifer_MaximallyEpistemicInterpretations_2013, Leifer_QuantumStateReal_2014}, which explains why contextuality proofs provide overlap bounds.  However, the converse is not necessarily true.  This indicates that better overlap bounds than those currently known might be obtainable by considering the constraints on maximally $\psi$-epistemic models that are not implied by noncontextuality.

\begin{acknowledgments}
Matthew Leifer wishes to thank Owen Maroney and Matt Pusey for useful discussions.

This research was supported in part by the Fetzer Franklin Fund of the John E.\ Fetzer Memorial Trust and by grant number FQXi-RFP-IPW-1905 from the Foundational Questions Institute and Fetzer Franklin Fund, a donor advised fund of Silicon Alley Community Foundation.  

Matthew Leifer is grateful for the hospitality of Perimeter Institute where part of this work was carried out. Research at Perimeter Institute is supported in part by the Government of Canada through the Department of Innovation, Science and Economic Development Canada and by the Province of Ontario through the Ministry of Economic Development, Job Creation and Trade.

Cristhiano Duarte was supported by a fellowship from the Grand Challenges Initiative at Chapman University. 
\end{acknowledgments}

\bibliography{list_of_references}
\end{document}